\newtheorem{theorem}{Theorem} [section]
\newtheorem{definition}[theorem]{Definition}
\newtheorem{example}[theorem]{Example}
\newtheorem{lemma}[theorem]{Lemma}
\newtheorem{proposition}[theorem]{Proposition}
\newenvironment{proof}[1][Proof]{\textbf{#1.} }{\ \rule{0.5em}{0.5em}}
\begin{document}

\author{Anjeza Bekolli\thanks{%
Department of Mathematics and Informatics, Faculty of Economy and Agribusiness, Agricultural University of Tirana, Albania. e-mail: abekolli@ubt.edu.al}, Luis A. Guardiola\thanks{%
Departamento de Métodos Cuantitativos para la Economía y
la Empresa, Universidad de Murcia, Murcia 30100, Spain. e-mail:
guardiola@um.es} $^{,} $\thanks{%
Corresponding author.} \ and Ana Meca\thanks{I. U. Centro de Investigación Operativa, Universidad Miguel Hernández de Elche, 03202 Elche, Spain.
e-mail: ana.meca@umh.es}}
\title{Profit allocation in agricultural supply chains: exploring the nexus of cooperation and compensation}
\maketitle

\begin{abstract}
In this paper, we focus on decentralized agricultural supply chains consisting of multiple non-competing distributors satisfying the demand of their respective markets. These distributors source a single product from a farmer through an agricultural cooperative, operating in a single period. The agents have the ability to coordinate their actions to maximize their profits, and we use cooperative game theory to analyze cooperation among them. The distributors can engage in joint ordering, increasing their order size, which leads to a decrease in the price per kilogram. Additionally, distributors have the opportunity to cooperate with the farmer, securing a reduced price per kilogram at the cost price, while compensating the farmer for any kilograms not acquired in the cooperation agreement. We introduce multidistributor-farmer games and we prove that all the agents have incentives to cooperate. We demonstrate the existence of stable allocations, where no subgroup of agents can be better off by separating. Moreover, we propose and characterize a distribution of the total profit that justly compensates the contribution of the farmer in any group of distributors. 
Finally, we explore the conditions under which the farmer can be compensated in order to maximize their revenues when cooperating with all players. 

\bigskip \noindent \textbf{Key words:} agricultural supply chain, cooperation, compensation, stable allocations

\noindent \textbf{2000 AMS Subject classification:} 91A12, 90B99
\end{abstract}

\newpage

\section{Introduction}

The agricultural sector is one of the most important sectors in a country. The revenue generated by this sector is significantly impacted by the effective management of every component in it. The supply chain is responsible for the process of transforming an agricultural product from a raw material into a finished good that is ready for market consumption.

In recent years, there has been a significant increase in scientific papers that analyze and study supply chain management in the agricultural sector. A detailed review related to the supply chain in agriculture for a 10-year period can be found in the material of Khandelwal et al. (2021). A more efficient management of the supply chain implies a rise in overall revenue and a decrease in costs, both financially and in terms of time. One of the key characteristics of the supply chain is the need for cooperation and coordination among its members to achieve optimal results. Collaboration within a supply chain can take various forms, such as pooling inputs among suppliers or retailers engaging in joint orders to increase quantity demand and reduce purchase costs. These collaborations are frequently implemented in real-world situations. However, when members of the supply chain collaborate, one of the biggest problems faced by actors in the chain is how to share the benefits. The results may be cost savings or profit revenue, and in both cases, sharing becomes a key concern for the efficient administration of the collaboration. In reality, agents frequently take a subjective attitude toward profit distribution in an effort to optimize their own interests. This method frequently results in disagreements, which ultimately prevent cooperation from succeeding.

Cooperative game theory, one of the main branches of game theory, is employed to distribute the outcomes of cooperation among the members of a supply chain. In Chen et al. (2019), cooperative game theory is employed to study the technological spillover, power dynamics, and coordination of firms' environmentally-friendly actions within a supply chain. A normative framework based on cooperative game theory is proposed in Ciardiello et al. (2020) to study pollution responsibility allocation in multi-tier supply chains. This model focuses on a linear supply chain and employs three responsibility principles (Upstream, Downstream, and Local Responsibility) and developing some associated pollution responsibility allocation rules. The core of inventory centralization games is studied in Hartman et al. (2000) and Guardiola et al. (2021). In Nagarajan and Sosic (2008), Meca and Timmer (2008) and Rzeczycki (2022) we find different surveys with some applications of cooperative game theory to supply chain management. This methodology goes beyond a simple percentage-based profit-sharing approach and is founded on principles that are agreeable to all parties involved, ensuring stability in cooperation. The present study draws inspiration from Guardiola et al. (2007), which examines and analyzes cooperation and benefit distribution in a supply chain comprising a supplier and multiple retailers through the application of cooperative game theory. 

In this paper, we  make use of cooperative game theory to address a supply chain scenario involving a single period and a single agricultural product. In this specific supply chain structure, distributors place orders to farmer for a certain quantity of the product. The farmer, represented by an agricultural cooperative acting on their behalf at no cost, fulfills the demands of multiple distributors. This agricultural cooperative serves as an intermediary between the farmer and the distributors. There are several research papers that emphasize the importance of cooperatives in agricultural supply chains. For instance, in Zi et al. (2021), the authors examine a three-level supply chain involving farmers, cooperatives, and retailers in the context of fresh agricultural products. The study demonstrates that implementing a suitable relational contract can enhance product freshness and increase supply chain profits, although it may not guarantee complete stability. Another example is found in Agbo et al. (2015), which explores a model involving a marketing cooperative and direct selling, where numerous farmers are members of the agricultural marketing cooperative. The authors conclude that direct selling can foster "healthy emulation" among farmers, resulting in increased production that benefits the cooperative.

In our study, the supply chain consists of a farmer and several distributors. We consider two possible modes of operation. Firstly, distributors can collaborate by consolidating their individual orders into a single order through the agricultural cooperative. By combining their orders, they can benefit from a reduced purchase price from the farmer due to the larger quantity. In this scenario, the agricultural cooperative simply forwards the combined order to the farmer without providing any additional information. The second option is for the distributors and the farmer to collaborate together. In this case, the agricultural cooperative facilitates an agreement that binds both parties, requiring distributors to compensate the farmer for any unsold goods at a reasonable price. This price is determined based on a logical basis that ensures profitability for both parties. Additionally, under this agreement, the profits generated from the product's sale in the market are shared between the distributors and the farmer. Moreover, we assume that the quantity of goods ordered by the distributors from the farmer never exceeds the total quantity produced by the farmer. This ensures that there is no depleting the harvest in the supply of the product while also providing ongoing compensation for the amount of product remaining in stock. 

Cooperation plays a crucial role in the success of supply chains, prompting us to employ cooperative game theory for their analysis and study. We develop a cooperative transferable utility game tailored specifically to this context, involving players such as distributors and farmer. This game is referred to as the multidistributor-farmer games (MDF-games). In this game, the value of a coalition composed by all the distributors and the farmer is equal to the highest profit they can generate together. Notably, when collaboration involves all members (distributors and the farmer), the overall benefit surpasses any number of benefits that could be attained through smaller coalitions. In order to encourage the farmer and distributors' engagement in the biggest coalition, it is crucial to assure them that such allocations, which benefit all parties, exist. In the paper, we demonstrate that MDF-games are balanced, thereby ensuring the members of the cooperation that stable allocations exist, which satisfy the interests of all parties. To ensure a satisfactory distribution among the members of the supply chain, particularly for the farmer, we propose and characterize a specific allocation of the total surplus obtained from the collaboration of all agents. This allocation is known as the farmer compensation allocation (henceforth FC-allocation), which guarantees coalitional stability and provides a satisfactory distribution of the surplus, addressing the interests and welfare of all parties involved, with a particular focus on compensating the farmer. By implementing the farmer compensation allocation, we strive to establish a stable and balanced arrangement that promotes cooperation and compensation within the supply chain.

In summary, our contributions expand upon the findings presented in Guardiola et al. (2007) by incorporating two significant realistic aspects:

\begin{itemize}

\item We take into account the limited production capacity of the farmer. Additionally, in our framework, cooperating distributors are obliged to compensate the farmer for the portion of the production that remains unsold.

\item Differing from the allocation proposed in Guardiola et al. (2007), the FC-allocation provides distinct compensation to the farmer for each distributor, considering their respective shares of the marginal contribution that the farmer brings to the coalitions in which they participate. Moreover, we present a characterization of this allocation based on only three essential properties.
\end{itemize}

This paper is structured into five sections. In Section 2, we begin by introducing definitions and notations in cooperative game theory. Section 3 presents the cooperative model involving a farmer and multiple distributors, associated with the corresponding multidistributor-farmer game (MDF-game). We proceed to analyze the class of MDF-games and establish the stability (in the sense of the core) of the altruistic allocation, which distributes all profits among the distributors without allocating any to the farmer. In Section 4, we shift our focus to an alternative profit allocation method that compensates the farmer for their contribution to the overall profit increase. We propose and characterize this allocation as the farmer compensation allocation. Furthermore, in Section 5, we propose another allocation, known as the minimal proportional compensation allocation, for situations where the farmer does not achieve the maximum revenue by cooperating with all the distributors. Our aim is to encourage the farmer to cooperate with all of them. To conclude the section, we provide a necessary and sufficient condition for this allocation to be coalitionally stable. Finally, in Section 6, we present a comprehensive summary of the outcomes and contributions of our paper, along with suggestions for future research directions.

\section{Preliminaries cooperative game theory}

A cooperative game with transferable utility (TU-game) consists of a set of players $N=\{1,2,3...n\}$ and the characteristic function $v\text{, which corresponds}$ to each subset of the
set $N$ with a number from the set of real numbers. The subsets formed by
the set $N$ are called coalitions, which are denoted by $S$. Formally the
characteristic function is an application $v :2^{N} \longrightarrow \mathbb{R}$ such that 
$v(\emptyset )=0$. The value $v(S)$ of the characteristic function measures  the maximum benefit that the members of the coalition $S$ can achieve by cooperating together. The coalition formed by all agents, $N$, it is called the grand coalition.

One of the main questions that the cooperative game theory studies is how to distribute the profit generated by the grand coalition, once it has been formed. 
This distribution is done through the so-called
allocations, which are represented by a vector $x \in \mathbb{R}^{n}$, where $n$ is the number of elements in the set of players $N$. The class of superadditive games is very interesting, and motivates the formation of the grand coalition as it ensures the greatest profit for the coalition. Formally, a TU-game $(N,v)$ is superadditive when for every two coalitions $S,T \subseteq N$ such that $S \cap T =\emptyset$ it is hold that $v(S\cup T)\geq v(S)+v(T).$ Additionally, TU-games in which the profit is higher the larger the coalition is are called  
strictly increasing monotone games. This is equivalent to: $v(T)\geq v(S)$, for all $S \subseteq T \subseteq N$.

Cooperative game theory offers a variety of solutions for distributing the profits generated by cooperation. There are two types of solutions: set solutions and point solutions. Set solutions are based on eliminating all those allocations that do not satisfy certain conditions, or in other words, keeping the set of allocations that do satisfy them. Point solutions are those obtained by means of an axiomatic characterization, that is to say, they are the only allocation that satisfies certain properties. 

The most important set solution is the core of a TU-game. It consists of all efficient allocations that are coalitionally stable, i.e., no coalition has an incentive to leave the grand coalition without worsening its profit. Formally, 

$$Core(N,v)=\left\{ x\in 
\mathbb{R} ^{N}:\sum_{i\in N}x_{i}=v(N)\text{ and, for all }S\subset N, \sum_{i\in S}x_{i}\geq
v(S) \right\} .$$ 

The result obtained by Bondareva (1963) and Shapley (1967) provides a necessary and sufficient condition that the core of a TU-game is not empty. Specifically, one of the most important theorems of cooperative game theory states that a TU-game has a non-empty core if and only if it is balanced. 


A point solution $\varphi$ refers to a function that, for each TU-game $(N,v)$, determines an allocation of $v(N)$. Formally, we have $\varphi:G^{N}\longrightarrow \mathbb{R}^{N}$, where $G^{N}$ denotes the class of all TU-games with player set $N$, and $\varphi_{i}(v)$ represents the profit assigned to player $i\in N$ in the game $v\in G^{N}$. Therefore, $\varphi(v)=(\varphi_{i}(v))_{i \in N}$ is a profit vector or allocation of $v(N)$. For a comprehensive overview of cooperative game theory, we recommend referring to González-Dıaz et al. (2010).

\section{Cooperation in multidistributor-farmer supply chains}

We consider a single-period, single-product agricultural supply chain. There is a farmer, represented by an agricultural cooperative acting on his behalf at no cost, who replenishes the demands of
multiple distributors. Let $N=\{1,...,n\}$ be the set of distributors, and we denote the farmer by $0$. Consequently, $N_{0}=N\cup \{0\}$ represents the set of all agents involved in the chain. In order to describe the model, we make the following natural assumptions:

\begin{itemize}
\item $Q$ represents the total harvested quantity by the farmer, measured in
kilograms.

\item $C$ denotes the total cost of the farmer's harvest.

\item Purchasing cost function $b:\mathbb{[}0,Q]\longrightarrow \mathbb{(}0,+\infty ]$ 
such that $b(q)$ represents the cost per kilogram when a total quantity $q>0$ is bought.
It is decreasing, continuous and positive function, which satisfies that $%
b(q)\cdot q$ is non-decreasing and $b(Q)>\frac{C}{Q}.$

\item Transport cost functions. For any distributor $i\in N$, $t_{i}:\mathbb{[}0,+\infty ]\longrightarrow \mathbb{(}0,+\infty ]$ such that $t_{i}(q)$ represents the transportation costs per kilogram incurred by the distributor when transporting a quantity $q>0$ of the product to the point of sale. It is decreasing, continuous and positive function, which satisfies
that $t_{i}(q)\cdot q$ is non-decreasing.

\item Distributor price function. For any distributor $i\in N$, $p_{i}:\mathbb{[}0,+\infty ]\longrightarrow \mathbb{R}$ such that $p_{i}(q)$ represents the price function per kilogram of the product for the distributor, given a quantity $q>0$. This function provides the recommended
selling price for the quantity $q$ in the market. It is decreasing and
continuous function, which satisfies that $p_{i}(q)\cdot q$ is non-decreasing
and $p_{i}(0)>t_{i}(0)+b(0).$

\item  $\overline{b}$ represents the compensation cost that distributors would have to pay to the farmer for the unpurchased quantities of the harvest if they cooperate. 
\end{itemize}

We adopt the following notation for the decision variables in the model:

\begin{itemize}
\item $q_{i}$ denotes the quantity of product ordered by the distributor $i\in N$
from the farmer.

\item $q_{S}:=\sum_{i\in S}q_{i},$ denotes the total order size by a
coalition $S\subseteq N$ of distributors to the farmer$.$
\end{itemize}

\medskip A multidistributor-farmer situation (MDF-situation) is a tuple $%
(N_{0},Q,C,b,T,P,\overline{b})$ where $T=(t_{1},...,t_{n})$ and $%
P=(p_{1},...,p_{n})$. Within this supply chain, the distributors submit
their orders to the agricultural cooperative, which plays the role of an
intermediary. Prior to placing the order, the cooperative provides them with
information regarding the harvested quantity of product $Q$, the purchasing cost
function $b(q)$ and the compensation cost $\overline{b}$. Figure 1 graphically illustrates a
MDF-situation.

\medskip

\begin{figure}[htbp]
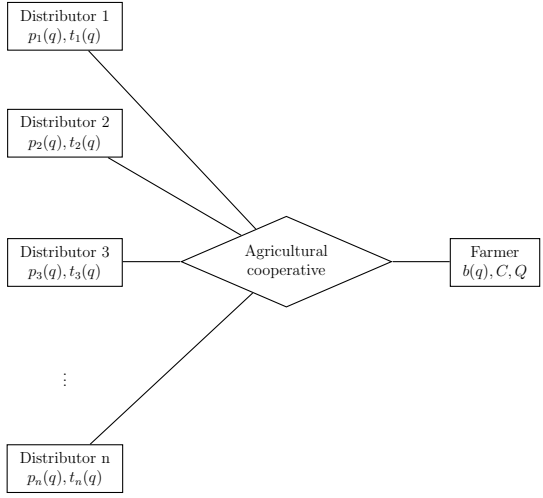

\vspace*{-0.5cm}
   \begin{center}
   \resizebox{.48\linewidth}{!}{
  \begin{psmatrix}
    & [name=dis1] \psframebox{\begin{tabular}{c}  Distributor 1 \\ $p_{1}(q),t_{1}(q)$ \end{tabular}} & &  \\
   & [name=dis2] \psframebox{\begin{tabular}{c}  Distributor 2 \\$p_{2}(q),t_{2}(q)$ \end{tabular}} & &  \\
   & [name=dis3] \psframebox{\begin{tabular}{c}  Distributor 3 \\ $p_{3}(q),t_{3}(q)$ \end{tabular}} &  [name=cop,mnode=dia] \begin{tabular}{c}Agricultural \\ cooperative \end{tabular}& [name=far] \psframebox{\begin{tabular}{c}  Farmer \\ $b(q),C,Q$ \end{tabular}} \\
& \vdots & &  \\
 & [name=disn] \psframebox{\begin{tabular}{c}  Distributor n \\ $p_{n}(q),t_{n}(q)$ \end{tabular}} & &  \\
        \ncline{-}{dis1}{cop}
	\ncline{-}{dis2}{cop}
 	\ncline{-}{dis3}{cop}
	\ncline{-}{disn}{cop}
	\ncline{-}{far}{cop}
  \end{psmatrix}}
\end{center}
\vspace*{-1.5cm}
    \caption{MDF-situations \label{figure}}
\end{figure}

We consider that agents involved in a MDF-situation may cooperate in two
different ways. One possible scenario is cooperation that excludes the
farmer, and involves distributors coordinating their actions. First,
distributors can cooperate by placing joint orders to the farmer through the
agricultural cooperative, which results in a discounted unit price and
ensures profitability in their cooperation. The cooperative only transmits
the requested order to the farmer without giving any further information.
Then, the joint profit of a coalition $S$ equals:%
\begin{eqnarray*}
\max &&\sum_{i\in S}\Pi _{i}^{S}\left( q\right) :=\sum_{i\in S}\left[ \left(
p_{i}(q_{i})-t_{i}(q_{i})-b(q_{S})\right) \cdot q_{i}\right] \\
\mbox{s.t.} &&q\in \mathbb{Q}^{S},
\end{eqnarray*}%
where 
\begin{equation*}
\mathbb{Q}^{S}:=\left\{ q\in \mathbb{R}_{+}^{\left\vert S\right\vert
}:q_{S}\leq Q\right\} .
\end{equation*}

Note that $\Pi _{i}^{S}:\mathbb{R}^{\left\vert S\right\vert }\longrightarrow 
\mathbb{R}$ defined as $\Pi _{i}^{S}\left( q\right) :=$ $\left(
p_{i}(q_{i})-t_{i}(q_{i})-b(q_{S})\right) \cdot q_{i}$ is the profit for distributor $i$ 
when ordering a quantity $q_{i}$ jointly with the coalition $S$.
It is a function of several variables $(q_{1},q_{2},...,q_{i},...,q_{\left\vert S\right\vert }).$
Moreover, $\mathbb{Q}^{S}$ is a compact set and the objective function $\sum_{i\in
S}\Pi _{i}^{S}\left( q\right) $ is continuous on $\mathbb{Q}^{S}.$ Let $%
q^{S}=\left( q_{i}^{S}\right) _{i\in S}$ denote the optimal order size for
coalition $S$, where $q_{i}^{S}$ is the optimal order size for distributor $%
i\in S$. The reader also may notice that
there may be feasible order sizes that cause distributors
losses. However, these will never be optimal, since we are guaranteed to
have order sizes for which there is a profit ($p_{i}(0)>t_{i}(0)+b(0)$ for
all $i\in N$).

The other possible scenario is cooperation that includes the farmer, which
involves the farmer $0$ collaborating with a group of distributor $S$. This 
coalition is denoted by $S_{0}:=S \cup \{0\}.$ In this scenario, the agricultural cooperative obliges the distributors to sign
an exclusivity contract with the farmer, so that the distributors must pay
compensation to the farmer, denoted as $\overline{b}$, for the unpurchased
quantities of the harvest $Q-q_{S}$. Each distributor $i\in S$ provides part
of this compensation proportionally to the amount of order that was requested $%
\frac{q_{i}}{q_{S}}.$ Moreover, they are expected to share with the farmer the total profits
from their sales to the public across different markets. On the other hand, the
farmer commits to enabling the distributors to acquire the desired quantity
at the cost price of $\frac{C}{Q}$. Therefore, the joint profit of a coalition $%
S_{0}$ equals:%
\begin{eqnarray*}
\max &&\sum_{i\in S}\Lambda _{i}^{S}\left( q\right) :=\sum_{i\in S}\left[
\left( p_{i}(q_{i})-t_{i}(q_{i})-\frac{C}{Q}-\frac{\overline{b}\cdot
(Q-q_{S})}{q_{S}}\right) \cdot q_{i}\right] \\
\mbox{s.t.} &&q\in \mathbb{Q}^{S},
\end{eqnarray*}

As in the previous case, $\sum_{i\in S}\Lambda _{i}^{S}\left( q\right) $ is
continuous on $\mathbb{Q}^{S}$ for all coalition $S\subseteq N.$ Let $%
q^{S_{0}}=\left( q_{i}^{S_{0}}\right) _{i\in S}$ denote the optimal order
size for coalition $S$ when they cooperate with the farmer. 

Despite the fact that these optimization problems aim to maximize the profit
from selling to the public across different markets without considering the
farmer's revenue, we can calculate it for each coalition $S\subseteq N\ $ as 
$r(S):=b(q_{S}^{S})\cdot q_{S}^{S}$ and $r(S_{0}):=\frac{C}{Q}\cdot q_{S}^{S_{0}}+\overline{b}\cdot (Q-q_{S}^{S_{0}})$ just in case the farmer cooperates with the coalition $S$.

Our study focuses on those MDF-situations that satisfy the following two additional assumptions:

\begin{itemize}
\item[1.] Sustainable Compensation (SC): $0<\overline{%
b}\leq \underset{S\subseteq N}{\min }\left\{ \left( b(q_{S}^{S})-\frac{C}{Q}%
\right) \cdot \frac{q_{S}^{S}}{Q-q_{S}^{S}}\right\} .$

\item[2.] No Depleting the Harvest (NDH): $q_{S}^{S},q_{S}^{S_{0}}<Q$ for all $S\subseteq N.$
\end{itemize}

The first assumption guarantees that the compensation cost cannot exceed 
the increase in revenue per unsold kilo for any coalition of distributors.
The second assumption ensures that the distributors will never reach the harvested
quantity. If NDH is not satisfied, certain distributors within the coalition would
need to reduce their orders, leading to an uneven distribution where the
distributor benefiting the most would order more. This case poses a more
intricate analysis and will be proposed as a potential avenue for future
research.
\medskip

The following lemma shows the principal properties of the profit function
for the distributors.

\begin{lemma}
\label{lemma}Let $(N_{0},Q,C,b,T,P,\overline{b})$ be an MDF-situation
with SC and NDH. Then, for all
distributor $i\in S\subseteq N$ it is satisfy that:

\begin{itemize}
\item[(i)] $\Lambda _{i}^{S}\left( q^{S_{0}}\right) \geq \Lambda
_{i}^{S}\left( q\right) $ for any $q\in \mathbb{Q}^{S}$;

\item[(ii)] $\Lambda _{i}^{N}\left( q\right) \geq \Lambda _{i}^{S}\left( 
\overline{q}\right) $ for any $q\in \mathbb{Q}^{N}$ and $\overline{q}\in 
\mathbb{Q}^{S}$ such that $\overline{q}_{k}=q_{k}$ for all $k\in S$;

\item[(iii)] $\Lambda _{i}^{N}\left( q^{N_{0}}\right) \geq \Lambda
_{i}^{S}\left( q^{S_{0}}\right) $;

\item[(iv)] $\Lambda _{i}^{S}\left( q^{S}\right) \geq \Pi _{i}^{S}\left(
q^{S}\right) $;

\item[(v)] $\Lambda _{i}^{N}\left( q^{N_{0}}\right) \geq \Pi _{i}^{S}\left(
q^{S}\right) $;

\item[(vi)] $q_{i}^{N_{0}}=q_{i}^{S_{0}}$.
\end{itemize}
\end{lemma}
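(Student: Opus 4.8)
I would build the whole proof on one telescoping identity. Since $\sum_{i\in S}\frac{q_i}{q_S}=1$, summing the farmer-coalition profits gives
$$\sum_{i\in S}\Lambda_i^S(q)=\sum_{i\in S}\Big(p_i(q_i)-t_i(q_i)-\tfrac{C}{Q}+\overline{b}\Big)q_i-\overline{b}\,Q,$$
so, up to the additive constant $\overline{b}Q$, the coalition $S_0$ maximizes the \emph{separable} objective $\sum_{i\in S}h_i(q_i)$ with $h_i(x):=\big(p_i(x)-t_i(x)-\tfrac{C}{Q}+\overline{b}\big)x$, over the compact set $\mathbb{Q}^S$ (existence of $q^{S_0}$ is already granted by continuity). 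The crucial point is that NDH forces the coupling constraint $q_S\le Q$ to be slack at $q^{S_0}$, so the problem decouples coordinatewise: each $q_i^{S_0}$ maximizes $h_i$ on an interval strictly containing a neighbourhood of $q_i^{S_0}$, and this characterization does not involve the coalition $S$. This is exactly (vi): $q_i^{S_0}$ and $q_i^{N_0}$ solve the same one-dimensional problem — the singleton coalition $\{i\}$ pins the maximizer down over $[0,Q)$, and NDH keeps it interior for larger coalitions — so they coincide. The same decoupling is the route I would take to (i).

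For (ii) I would argue by monotonicity alone: $\Lambda_i^N(q)$ and $\Lambda_i^S(\overline{q})$ share the ``own'' part $(p_i(q_i)-t_i(q_i)-\tfrac{C}{Q})q_i$ and differ only through the compensation term $-\,\overline{b}\,q_i\,\frac{Q-r}{r}$, with $r=q_N\ge q_S=\overline{q}_S$ because $\overline{q}_k=q_k$ on $S$ and the remaining coordinates of $q$ are nonnegative; since $r\mapsto\frac{Q-r}{r}$ decreases on $(0,Q]$, enlarging the total order cannot hurt distributor $i$. For (iii) I would combine (vi) with this same observation: $q_i^{N_0}=q_i^{S_0}$, while $q_N^{N_0}=\sum_{k\in N}q_k^{N_0}\ge\sum_{k\in S}q_k^{S_0}=q_S^{S_0}$, so the decreasing compensation term yields $\Lambda_i^N(q^{N_0})\ge\Lambda_i^S(q^{S_0})$ (alternatively, apply (i) on $N$ to the zero-extension of $q^{S_0}$).

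For (iv), cancelling $q_i^S\ge 0$ turns $\Lambda_i^S(q^S)\ge\Pi_i^S(q^S)$ into $\tfrac{C}{Q}+\tfrac{\overline{b}(Q-q_S^S)}{q_S^S}\le b(q_S^S)$, i.e.\ $\overline{b}\le\big(b(q_S^S)-\tfrac{C}{Q}\big)\tfrac{q_S^S}{Q-q_S^S}$, which is exactly SC evaluated at the coalition $S$ — well posed because $b(q_S^S)\ge b(Q)>\tfrac{C}{Q}$ and $q_S^S<Q$ by NDH. Finally (v) chains the pieces: $\Lambda_i^N(q^{N_0})\ge\Lambda_i^S(q^{S_0})$ by (iii), $\Lambda_i^S(q^{S_0})\ge\Lambda_i^S(q^S)$ by (i) applied with $q=q^S$, and $\Lambda_i^S(q^S)\ge\Pi_i^S(q^S)$ by (iv).

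The step I expect to be the real obstacle is (i). Unlike the coalition total, the individual profit $\Lambda_i^S(q)=h_i(q_i)-\overline{b}Q\,\frac{q_i}{q_S}$ keeps the non-separable factor $q_i/q_S$, so optimality of $q^{S_0}$ for the separable \emph{sum} does not by itself give optimality for $\Lambda_i^S$; one must rule out that $\Lambda_i^S$ is maximized by pushing the other orders up toward the boundary $q_S=Q$. I would try to close this by a case split using NDH and SC to show that boundary is never preferred, or by first deriving a single-peakedness property of $h_i$ from the hypotheses that $p_i(q)q$ and $t_i(q)q$ are non-decreasing and $p_i(0)>t_i(0)+b(0)$, and then transferring it to $\Lambda_i^S$. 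Once (i) and the telescoping identity are secured, everything else in the lemma is bookkeeping.
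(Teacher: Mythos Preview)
Your approach mirrors the paper's almost exactly: the same telescoping identity $\sum_{i\in S}\Lambda_i^S(q)=\sum_{i\in S}h_i(q_i)-\overline b\,Q$, the same separability argument for (vi), the same monotonicity computation for (ii), the same SC-based inequality for (iv), and the same chaining through (i) and (ii) for (iii) and (v). The only substantive divergence is that you explicitly flag (i) as the hard step, whereas the paper simply records the separable form of the \emph{sum} and then declares the \emph{individual} inequality $\Lambda_i^S(q^{S_0})\ge\Lambda_i^S(q)$ without further argument.

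Your concern about (i) is not just legitimate --- it is decisive. For $|S|\ge 2$ the individual statement is actually false: keep $q_i=q_i^{S_0}$ and push some $q_j$ with $j\ne i$ upward so that $q_S$ increases toward $Q$; then $\Lambda_i^S(q)=h_i(q_i^{S_0})-\overline b\,Q\,q_i^{S_0}/q_S$ strictly exceeds $\Lambda_i^S(q^{S_0})=h_i(q_i^{S_0})-\overline b\,Q\,q_i^{S_0}/q_S^{S_0}$, since $q_S^{S_0}<Q$ by NDH. So neither your proposed case split nor a single-peakedness argument can rescue the individual inequality, and the paper's proof contains exactly the gap you sensed.

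Fortunately this does not damage anything else. Every downstream use of (i) --- in the proofs of (iii) and (v), and later in the balancedness and core theorems --- only requires the \emph{summed} inequality $\sum_{i\in S}\Lambda_i^S(q^{S_0})\ge\sum_{i\in S}\Lambda_i^S(q)$, which is precisely the optimality of $q^{S_0}$ and does follow from separability. Your alternative route to (iii) via (vi) and the monotonicity in (ii) is cleaner and bypasses (i) entirely; the same idea, combined with $\sum_{i\in S}\Lambda_i^N(q^{N_0})\ge\sum_{i\in S}h_i(q_i^{N_0})-\overline b\,Q=\sum_{i\in S}\Lambda_i^S(q^{S_0})$, also handles (v) at the level of sums. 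So apart from the literal statement of (i), your sketch is correct and matches the paper.
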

\begin{proof}

(i) Consider the objective function

\begin{eqnarray*}
\sum_{i\in S}\Lambda _{i}^{S}\left( q\right) &=&\sum_{i\in S}\left[ \left(
p_{i}(q_{i})-t_{i}(q_{i})-\frac{C}{Q}-\frac{\overline{b}\cdot (Q-q_{S})}{%
q_{S}}\right) \cdot q_{i}\right] \\
&=&\sum_{i\in S}\left[ \left( p_{i}(q_{i})-t_{i}(q_{i})-\frac{C}{Q}\right)
\cdot q_{i}\right] -\overline{b}\cdot (Q-q_{S}) \\
&=&\sum_{i\in S}\left[ \left( p_{i}(q_{i})-t_{i}(q_{i})-\frac{C}{Q}\right)
\cdot q_{i}\right] -\overline{b}\cdot Q+\overline{b}\cdot q_{S} \\
&=&\sum_{i\in S}\left[ \left( p_{i}(q_{i})-t_{i}(q_{i})-(\frac{C}{Q}-%
\overline{b})\right) \cdot q_{i}\right] -\overline{b}\cdot Q  \label{fol}
\end{eqnarray*}

We may notice that it is a sum of independent terms. Therefore, the maximum of the sum is equal to the sum of the maximums of each term, given that $q_{S}^{S}$ and $q_{S}^{S_{0}}$ are both less than Q for all $S\subseteq N.$ Formally, 
\begin{equation}
\underset{q\in \mathbb{Q}^{S}}{\max }\sum_{i\in S}\Lambda _{i}^{S}\left(
q\right) =\sum_{i\in S}\left[ \underset{0\leq q_{i}\leq Q}{\max }\left\{
\left( p_{i}(q_{i})-t_{i}(q_{i})-(\frac{C}{Q}-\overline{b})\right) \cdot
q_{i}\right\} \right] -\overline{b}\cdot Q.  \label{lol}
\end{equation}%
We can conclude that, $\Lambda _{i}^{S}\left( q^{S_{0}}\right) \geq \Lambda
_{i}^{S}\left( q\right) $ for any $q\in \mathbb{Q}^{S}.$

(ii) Consider a distributor $i\in S\subseteq N$,  $q\in \mathbb{Q}^{N}$ and $\overline{q}\in 
\mathbb{Q}^{S}$ such that $\overline{q}_{k}=q_{k}$ for all $k\in S$,
\begin{eqnarray*}
\Lambda _{i}^{N}\left( q\right) &=&\left( p_{i}(q_{i})-t_{i}(q_{i})-\frac{C}{%
Q}-\frac{\overline{b}\cdot (Q-q_{N})}{q_{N}}\right) \cdot q_{i} \\
&=&\left( p_{i}(q_{i})-t_{i}(q_{i})-\frac{C}{Q}-\frac{\overline{b}\cdot
(Q-q_{N\backslash S-}q_{S})}{q_{N\backslash S+}q_{S}}\right) \cdot q_{i} \\
&\geq &\left( p_{i}(q_{i})-t_{i}(q_{i})-\frac{C}{Q}-\frac{\overline{b}\cdot
(Q-q_{S})}{q_{S}}\right) \cdot q_{i}=\Lambda _{i}^{S}\left( \overline{q}%
\right)
\end{eqnarray*}

(iii) By (i) we have that $\Lambda _{i}^{N}\left( q^{N_{0}}\right) \geq
\Lambda _{i}^{N}\left( \widehat{q}^{S_{0}}\right) $ where $\widehat{q_{k}}%
^{S_{0}}=q_{k}^{S_{0}}$ for all $k\in S$ and $\widehat{q_{k}}^{S_{0}}=0$
otherwise. Finally, by (ii), \ $\Lambda _{i}^{N}\left( \widehat{q}%
^{S_{0}}\right) \geq \Lambda _{i}^{S}\left( q^{S_{0}}\right) .$

(iv) We must prove the following inequality $\Lambda _{i}^{S}\left(
q^{S}\right) \geq \Pi _{i}^{S}\left( q^{S}\right)$, which is equivalent
to:%
\begin{eqnarray*}
\left( p_{i}(q_{i}^{S})-t_{i}(q_{i}^{S})-\frac{C}{Q}-\frac{\overline{b}\cdot
(Q-q_{S}^{S})}{q_{S}^{S}}\right) \cdot q_{i}^{S} &\geq &\left(
p_{i}(q_{i}^{S})-t_{i}(q_{i}^{S})-b(q_{S}^{S})\right) \cdot q_{i}^{S}; \\
-\frac{C}{Q}-\frac{\overline{b}\cdot (Q-q_{S}^{S})}{q_{S}^{S}} &\geq
&-b(q_{S}^{S}); \\
\frac{\overline{b}\cdot (Q-q_{S}^{S})}{q_{S}^{S}} &\leq &b(q_{S}^{S})-\frac{C%
}{Q}; \\
\overline{b} &\leq &\left( b(q_{S}^{S})-\frac{C}{Q}\right) \cdot \frac{%
q_{S}^{S}}{Q-q_{S}^{S}}.
\end{eqnarray*}

Note that the last inequality is well-defined because of $q_{S}^{S}<Q$ for all $%
S\subseteq N.$ Additionally, it is true since $\overline{b}\leq 
\underset{S\subseteq N}{\min }\left\{ \left( b(q_{S}^{S})-\frac{C}{Q}\right)
\cdot \frac{q_{S}^{S}}{Q-q_{S}^{S}}\right\} .$ Hence, $\Lambda
_{i}^{S}\left( q^{S}\right) \geq \Pi _{i}^{S}\left( q^{S}\right) .$

(v) We may notice that as $q^{N_{0}}$ is optimal for $\Lambda _{i}^{N}\left( q\right) $ it
follows that $\Lambda _{i}^{N}\left( q^{N_{0}}\right) \geq \Lambda
_{i}^{N}\left( \widehat{q}^{S}\right) $ where $\widehat{q_{k}}^{S}=q_{k}^{S}$
for all $k\in S$ and $\widehat{q_{k}}^{S}=0$ otherwise. Moreover by (ii), $\Lambda
_{i}^{N}\left( \widehat{q}^{S}\right) \geq \Lambda _{i}^{S}\left(
q^{S}\right)$. Finally, by (iv) we conclude that $\Lambda _{i}^{N}\left(
q^{N_{0}}\right) \geq \Pi _{i}^{S}\left( q^{S}\right) .$

(vi) It follows inmediatly from (\ref{lol}).\hfill 
\end{proof}

\bigskip

We are ready now to introduce the cooperative game that corresponds to an
MDF-situation.

\begin{definition}
Let $(N_{0},Q,C,b,T,P,\overline{b})$ be an MDF-situation. The corresponding MDF-game $(N_{0},v)$ is
given by 
\begin{equation*}
v(S):=\sum_{i\in S}\Pi _{i}^{S}\left( q^{S}\right)
\end{equation*}%
and%
\begin{equation*}
v(S_{0}):=\sum_{i\in S}\Lambda _{i}^{S}\left( q^{S_{0}}\right)
\end{equation*}%
for all coalitions $S\subseteq N$ and $v(\emptyset )=v(0)=0.$
\end{definition}

The following proposition shows some characteristics of MDF-games.

\begin{proposition}
Let $(N_{0},Q,C,b,T,P,\overline{b})$ be an MDF-situation with SC and NDH, and $(N_{0},v)$ be the corresponding
MDF-game. Then,

\begin{itemize}
\item[(i)] $0<v(S)\leq v(S_{0})$ for all $\emptyset \neq S\subseteq N.$

\item[(ii)] $v$ is superadditive;

\item[(iii)] $v$ is strictly increasing monotone;

\item[(iv)] $v(S_{0})=\sum_{i\in S}v(\{0,i\})+\left( \left\vert S\right\vert
-1\right) \cdot \overline{b}\cdot Q$
\end{itemize}
\end{proposition}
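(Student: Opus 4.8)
I would treat the items in the order (iv), (i), (iii), (ii), since the additive formula in~(iv) is the engine behind the other three. The proof of Lemma~\ref{lemma} already records, in~(\ref{lol}) and using NDH, that $v(S_0)=\sum_{i\in S}h_i-\overline{b}\cdot Q$, where $h_i:=\underset{0\le q_i\le Q}{\max}\{(p_i(q_i)-t_i(q_i)-(\tfrac{C}{Q}-\overline{b}))\cdot q_i\}$ does \emph{not} depend on the coalition $S$. Specialising to $S=\{i\}$ gives $v(\{0,i\})=h_i-\overline{b}\cdot Q$, i.e.\ $h_i=v(\{0,i\})+\overline{b}\cdot Q$; plugging this back yields $v(S_0)=\sum_{i\in S}\bigl(v(\{0,i\})+\overline{b}\cdot Q\bigr)-\overline{b}\cdot Q=\sum_{i\in S}v(\{0,i\})+(\left\vert S\right\vert -1)\cdot\overline{b}\cdot Q$, which is~(iv). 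I record for later that $h_i\ge 0$ (the bracket vanishes at $q_i=0$), and in fact $h_i>0$, since as $q_i\to 0$ the coefficient tends to $p_i(0)-t_i(0)-\tfrac{C}{Q}+\overline{b}>0$ (because $p_i(0)>t_i(0)+b(0)\ge t_i(0)+b(Q)>t_i(0)+\tfrac{C}{Q}$).

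\textbf{Item (i).} For $v(S)>0$ I would exhibit a feasible order with tiny positive components: the bracket $p_i(q_i)-t_i(q_i)-b(q_S)$ tends to $p_i(0)-t_i(0)-b(0)>0$ as $q\to 0$, so for such an order every $\Pi_i^S(q)$ is positive, whence $v(S)=\underset{q\in\mathbb{Q}^S}{\max}\sum_{i\in S}\Pi_i^S(q)>0$. For $v(S)\le v(S_0)$ I would chain Lemma~\ref{lemma}(iv), which gives $\Lambda_i^S(q^S)\ge\Pi_i^S(q^S)$ termwise, with Lemma~\ref{lemma}(i), which applied at $q=q^S\in\mathbb{Q}^S$ gives $\Lambda_i^S(q^{S_0})\ge\Lambda_i^S(q^S)$: summing over $i\in S$, $v(S)=\sum_{i\in S}\Pi_i^S(q^S)\le\sum_{i\in S}\Lambda_i^S(q^S)\le\sum_{i\in S}\Lambda_i^S(q^{S_0})=v(S_0)$.

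\textbf{Item (iii).} To prove $v(T)\ge v(S)$ for $S\subseteq T\subseteq N_0$ I would split on whether $0\in T$. If $0\notin T$ (so $S,T\subseteq N$): extend $q^S$ to $T$ by zeros on $T\setminus S$; this order is feasible for $T$ (its total is $q_S^S\le Q$), the new distributors contribute nothing, and the unit cost $b(q_S^S)$ is unchanged, so $v(T)\ge\sum_{i\in S}\Pi_i^S(q^S)=v(S)$. If $0\in T$, write $T=T'\cup\{0\}$: when $0\notin S$ we have $S\subseteq T'$, and $v(S)\le v(S_0)$ by~(i) while $v(S_0)\le v(T'_0)$ because by~(iv) $v(T'_0)-v(S_0)=\sum_{i\in T'\setminus S}h_i\ge 0$; when $0\in S$, write $S=S'\cup\{0\}$ with $S'\subseteq T'$, and again~(iv) gives $v(T'_0)-v(S'_0)=\sum_{i\in T'\setminus S'}h_i\ge 0$. (With $h_i>0$ these become strict increases, though Section~2 only asks for $\ge$.)

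\textbf{Item (ii).} Let $S\cap T=\emptyset$. If the farmer lies in one of them, say $T=T'\cup\{0\}$ (so $0\notin S$ and $S\cap T'=\emptyset$), then~(iv) reduces the claim to an identity plus~(i): $v(S\cup T)-v(T)=v((S\cup T')_0)-v(T'_0)=\sum_{i\in S}h_i=v(S_0)+\overline{b}\cdot Q\ge v(S)$, so $v(S\cup T)\ge v(S)+v(T)$; the degenerate case $T=\{0\}$ is exactly~(i). If $0\notin S\cup T$, both are distributor coalitions, and I would concatenate the optimal orders: let $\tilde q\in\mathbb{R}_+^{\left\vert S\cup T\right\vert}$ coincide with $q^S$ on $S$ and with $q^T$ on $T$. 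Since $b$ is decreasing, $b(\tilde q_{S\cup T})=b(q_S^S+q_T^T)\le\min\{b(q_S^S),b(q_T^T)\}$, hence $\Pi_i^{S\cup T}(\tilde q)\ge\Pi_i^S(q^S)$ for $i\in S$ and $\Pi_i^{S\cup T}(\tilde q)\ge\Pi_i^T(q^T)$ for $i\in T$, giving $v(S\cup T)\ge\sum_{i\in S\cup T}\Pi_i^{S\cup T}(\tilde q)\ge v(S)+v(T)$ --- \emph{provided} $\tilde q$ is feasible, i.e.\ $q_S^S+q_T^T\le Q$. Verifying this last inequality is the step I expect to be the main obstacle: NDH only bounds $q_R^R<Q$ for each \emph{single} coalition $R$, so the concatenation of two independently optimal orders does not obviously stay at or below $Q$, and this is precisely the point at which NDH (which, as the authors stress, is indispensable --- the case where it fails is genuinely different) has to be exploited. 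The whole of~(ii) in the no-farmer case rests on this single inequality.
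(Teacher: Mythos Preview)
Your proofs of (i) and (iv) are essentially the paper's: both pull the separable form from~(\ref{lol}), specialise to singletons, and substitute back; and both handle (i) by exhibiting a small feasible order and then chaining Lemma~\ref{lemma}(iv) with Lemma~\ref{lemma}(i).

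Where you diverge is in the order and in the use of~(iv). The paper proves (i), (ii), (iii), (iv) in sequence, obtaining (iii) as an immediate corollary of (i)+(ii), and handling the farmer case of (ii) by a direct chain of inequalities (rewriting $v(S_0)+v(T)$ in the separable form, bounding $b(q_T^T)\ge \tfrac{C}{Q}-\overline{b}$, and re-optimising). You instead establish~(iv) first and then use the additive formula $v(R_0)=\sum_{i\in R}h_i-\overline{b}Q$ to dispatch both (iii) and the farmer case of~(ii) in one line each. This is cleaner, and it has a real payoff: your route to~(iii) does not depend on~(ii), so it is insulated from the feasibility issue you flag.

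On that issue: your worry is well placed, and the paper does not resolve it. In Case~1 of its proof of~(ii) the paper simply writes $b(\hat q_{S\cup T})=b(q_S^S+q_T^T)$ and proceeds, without checking that $q_S^S+q_T^T\le Q$ so that $\hat q\in\mathbb{Q}^{S\cup T}$ and $b$ is even defined at that argument. NDH only gives $q_R^R<Q$ for each single coalition $R$, which does not bound the sum of two such optima. So the gap you identify in your no-farmer concatenation argument is present verbatim in the paper's own proof; it is glossed over rather than addressed.
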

\begin{proof}
(i) We know that $p_{i}(0)>t_{i}(0)+b(0)$ for all $i\in N.$ Then, there exist a $\hat{q}%
\in \mathbb{Q}^{S}$ such that $\Pi _{i}^{S}\left( \hat{q}\right) >0$ for all 
$i\in S.$ Hence, $v(S)=\sum_{i\in S}\Pi _{i}^{S}\left( q^{S}\right)
\geq \sum_{i\in S}\Pi _{i}^{S}\left( \hat{q}\right) >0.$ Moreover, for each $%
i\in N$ we have that $0<v(S)=\sum_{i\in S}\Pi _{i}^{S}\left( q^{S}\right)
\leq \sum_{i\in S}\Lambda _{i}^{S}\left( q^{S}\right) $ by property (iv) of
Lemma \ref{lemma}. Finally, by property (i) of the same lemma we have that $%
\sum_{i\in S}\Lambda _{i}^{S}\left( q^{S}\right) \leq \sum_{i\in S}\Lambda
_{i}^{S}\left( q^{S_{0}}\right) =v(S_{0}).$ Hence, $0<v(S)\leq v(S_{0}).$

(ii) We can distinguish two cases. 
Case 1 (without farmer). Let $S,T\subseteq N$ be two disjoint coalitions of distributors. By
definition of the MDF-game 
\begin{equation*}
v(S)+v(T)=\sum_{i\in S}\Pi _{i}^{S}\left( q^{S}\right) +\sum_{i\in T}\Pi
_{i}^{T}\left( q^{T}\right) .
\end{equation*}%
Define the specific order quantity $\hat{q}_{i}$ for retailer $i$ in
coalition $S\cup T$ by $\hat{q}_{i}=q_{i}^{S}$ if $i\in S$ and $\hat{q}%
_{i}=q_{i}^{T}$ if $i\in T$. Then, the last expresion is equal to: 
\begin{eqnarray*}
\lefteqn{\sum_{i\in S}\left[ \left(
p_{i}(q_{i}^{S})-t_{i}(q_{i}^{S})-b(q_{S}^{S})\right) \cdot q_{i}^{S}\right]
+\sum_{i\in T}\left[ \left(
p_{i}(q_{i}^{T})-t_{i}(q_{i}^{T})-b(q_{T}^{T})\right) \cdot q_{i}^{T}\right] 
} \\
&\leq &\sum_{i\in S}\left[ \left( p_{i}(\hat{q}_{i})-t_{i}(\hat{q}_{i})-b(%
\hat{q}_{S\cup T})\right) \cdot \hat{q}_{i}\right] +\sum_{i\in T}\left[
\left( p_{i}(\hat{q}_{i})-t_{i}(\hat{q}_{i})-b(\hat{q}_{S\cup T})\right)
\cdot \hat{q}_{i}\right] \\
&=&\sum_{i\in S\cup T}\left[ \left( p_{i}(\hat{q}_{i})-t_{i}(\hat{q}_{i})-b(%
\hat{q}_{S\cup T})\right) \cdot \hat{q}_{i}\right] =\sum_{i\in S\cup T}\Pi
_{i}^{S\cup T}\left( \hat{q}\right)
\end{eqnarray*}%
since $b(\hat{q}_{S\cup T})=b(q_{S}^{S}+q_{T}^{T})\leq \min
\{b(q_{S}^{S}),b(q_{T}^{T})\}$. Finally,%
\begin{equation*}
\sum_{i\in S\cup T}\Pi _{i}^{S\cup T}\left( \hat{q}\right) \leq \sum_{i\in
S\cup T}\Pi _{i}^{S\cup T}\left( q^{S\cup T}\right) =v(S\cup T),
\end{equation*}%
since the quantities $\hat{q}$ is not necessary optimal for coalition $S\cup
T$. We can conclude that $v(S)+v(T) \leq v(S\cup T)$.

Case 2 (with farmer). Let $S,T\subseteq N$ be two disjoint coalitions of distributors. We also define the specific order quantity $\hat{q}_{i}$ for
retailer $i$ in coalition $S\cup T$ by $\hat{q}_{i}=q_{i}^{S_{0}}$ if $i\in
S $ and $\hat{q}_{i}=q_{i}^{T}$ if $i\in T.$ Then,
\begin{eqnarray*}
v(S_{0})+v(T) &=&\sum_{i\in S}\Lambda _{i}^{S}\left( q^{S_{0}}\right)
+\sum_{i\in T}\Pi _{i}^{T}\left( q^{T}\right) \\
&=&\sum_{i\in S}\left[ \left( p_{i}(q_{i}^{S_{0}})-t_{i}(q_{i}^{S_{0}})-%
\frac{C}{Q}-\frac{\overline{b}\cdot (Q-q_{S}^{S_{0}})}{q_{S}^{S_{0}}}\right)
\cdot q_{i}^{S_{0}}\right] \\
&&+\sum_{i\in T}\left[ \left(
p_{i}(q_{i}^{T})-t_{i}(q_{i}^{T})-b(q_{T}^{T})\right) \cdot q_{i}^{T}\right]
\\
&=&\sum_{i\in S}\left[ \left( p_{i}(q_{i}^{S_{0}})-t_{i}(q_{i}^{S_{0}})-(%
\frac{C}{Q}-\overline{b})\right) \cdot q_{i}^{S_{0}}\right] -\overline{b}%
\cdot Q \\
&&+\sum_{i\in T}\left[ \left(
p_{i}(q_{i}^{T})-t_{i}(q_{i}^{T})-b(q_{T}^{T})\right) \cdot q_{i}^{T}\right]
\\
&\leq &\sum_{i\in S}\left[ \left( p_{i}(q_{i}^{S_{0}})-t_{i}(q_{i}^{S_{0}})-(%
\frac{C}{Q}-\overline{b})\right) \cdot q_{i}^{S_{0}}\right] -\overline{b}%
\cdot Q \\
&&+\sum_{i\in T}\left[ \left( p_{i}(q_{i}^{T})-t_{i}(q_{i}^{T})-(\frac{C}{Q}-%
\overline{b})\right) \cdot q_{i}^{T}\right] \\
&=&\sum_{i\in S\cup T}\left[ \left( p_{i}(\hat{q}_{i})-t_{i}(\hat{q}_{i})-(%
\frac{C}{Q}-\overline{b})\right) \cdot \hat{q}_{i}\right] -\overline{b}\cdot
Q \\
&\leq &\sum_{i\in S}\left[ \left( p_{i}(q_{i}^{S_{0}\cup
T})-t_{i}(q_{i}^{S_{0}\cup T})-(\frac{C}{Q}-\overline{b})\right) \cdot
q_{i}^{S_{0}\cup T}\right] -\overline{b}\cdot Q \\
&=&\sum_{i\in S\cup T}\Lambda _{i}^{S\cup T}\left( q^{S_{0}\cup T}\right)
=v(S_{0}\cup T).
\end{eqnarray*}
We can conclude that $v(S_{0})+v(T) \leq v(S\cup T)$.

(iii) This follows immediately from (i) and (ii).

(iv) We know from (\ref{lol}) that  
\begin{eqnarray*}
v(S_{0}) &=&\sum_{i\in S}\Lambda _{i}^{S}\left( q^{S_{0}}\right) =\sum_{i\in
S}\left[ \left( p_{i}(q_{i}^{S_{0}})-t_{i}(q_{i}^{S_{0}})-(\frac{C}{Q}-%
\overline{b})\right) \cdot q_{i}^{S_{0}}\right] -\overline{b}\cdot Q \\
&=&\left( \sum_{i\in S}\underset{q_{i}\in \mathbb{Q}^{\{i\}}}{\max }\left\{
\left( p_{i}(q_{i})-t_{i}(q_{i})-(\frac{C}{Q}-\overline{b}%
)\right) \cdot q_{i}\right\} \right) -\overline{b}\cdot Q \\
&=&\sum_{i\in S}\left( v(\{0,i\})+\overline{b}\cdot Q\right) -\overline{b}%
\cdot Q=\sum_{i\in S}v(\{0,i\})+\left( \left\vert S\right\vert -1\right)
\cdot \overline{b}\cdot Q
\end{eqnarray*}\hfill
\end{proof}

Property (i) guarantees that any coalition of distributors generates
profits, and they are higher with the farmer. Properties (ii) and (iii) are essential for the feasibility of
forming a grand coalition. Property (iv) highlights that as the farmer
collaborates with a larger group of distributors, the savings increase
significantly since they no longer individually compensate him for
unfulfilled orders, but rather compensate him collectively. This represents
a key difference from the findings in Guardiola et al. (2007) (property (iv) of Lemma 4.4),
which demonstrated that games become additive when cooperating with the
supplier.
\medskip

Consider a MDF-situation $(N_{0},Q,C,b,T,P,\overline{b})$ and the associated MDF-game $(N_{0},v).$
The core of this game $(N_{0},v)$ is defined as follows:

\begin{equation*}
Core(N_{0},v)=\left\{ x\in \mathbb{R}^{\left\vert N_{0}\right\vert
}\left\vert 
\begin{array}{c}
\sum_{i\in N_{0}}x_{i}=v(N_{0}); \\ 
\sum_{i\in S}x_{i}\geq v(S)\text{ and}\sum_{i\in S_{0}}x_{i}\geq
v(S_{0}),\forall S\subseteq N%
\end{array}%
\right. \right\} .
\end{equation*}%

The main result of this section shows that the MDF-games arising
from situations with sustainable compensation and no depleting the harvest, are always balanced.

\begin{theorem}
Let $(N_{0},Q,C,b,T,P,\overline{b})$ be an MDF-situation with SC and NDH, and $(N_{0},v)$ be the corresponding
MDF-game. Then, $(N_{0},v)$ is balanced.
\end{theorem}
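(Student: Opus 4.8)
The plan is to prove balancedness by producing an explicit element of $Core(N_{0},v)$, since by the Bondareva--Shapley theorem quoted in Section 2 a TU-game is balanced if and only if its core is non-empty. The natural candidate is the \emph{altruistic allocation} $x$ defined by $x_{0}:=0$ and $x_{i}:=\Lambda_{i}^{N}\left(q^{N_{0}}\right)$ for every distributor $i\in N$; that is, the farmer receives nothing and each distributor keeps the profit it obtains in the grand coalition. (Equivalently, one could verify the balanced-weights inequalities directly, but exhibiting a core vector is cleaner and matches the narrative of the introduction.)

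First I would check efficiency. Applying the definition of the MDF-game with $S=N$ gives $v(N_{0})=\sum_{i\in N}\Lambda_{i}^{N}\left(q^{N_{0}}\right)=\sum_{i\in N_{0}}x_{i}$, so $x$ distributes exactly $v(N_{0})$. Next I would verify the two families of coalitional constraints appearing in the definition of $Core(N_{0},v)$. For a coalition $S\subseteq N$ of distributors only, $\sum_{i\in S}x_{i}=\sum_{i\in S}\Lambda_{i}^{N}\left(q^{N_{0}}\right)\geq\sum_{i\in S}\Pi_{i}^{S}\left(q^{S}\right)=v(S)$, where the inequality is obtained term by term from part (v) of Lemma \ref{lemma}. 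For a coalition $S_{0}=S\cup\{0\}$ containing the farmer, since $x_{0}=0$ we need $\sum_{i\in S}\Lambda_{i}^{N}\left(q^{N_{0}}\right)\geq v(S_{0})=\sum_{i\in S}\Lambda_{i}^{S}\left(q^{S_{0}}\right)$, which again holds term by term, this time by part (iii) of Lemma \ref{lemma}. Hence $x\in Core(N_{0},v)$, the core is non-empty, and the game is balanced.

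There is essentially no remaining obstacle: the whole difficulty has been front-loaded into Lemma \ref{lemma}, and parts (iii) and (v) are exactly the two inequalities required for the two types of coalitions. The only points needing a word of care are that parts (iii) and (v) must be invoked for each $i\in S$ individually and then summed, and that the grand coalition in this model is $N_{0}$ rather than $N$, so the efficiency condition is read off from the $\Lambda^{N}$-expression at $q^{N_{0}}$. As a side remark one may note that property (i) of the preceding proposition yields $x_{i}>0$ for every distributor, so the altruistic allocation is also individually rational for them, though this is not needed for balancedness.
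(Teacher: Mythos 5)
Your proposal is correct and follows essentially the same route as the paper: exhibit the altruistic allocation $x^{a}(v)$ with $x_{0}=0$, $x_{i}=\Lambda_{i}^{N}(q^{N_{0}})$, check efficiency, and verify the coalitional constraints term by term using Lemma \ref{lemma}. If anything, your citation of part (v) for the distributor-only coalitions is the more accurate reference (the paper invokes part (iv) there, though the inequality actually needed is exactly (v)).
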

\begin{proof}
Define the allocation $x^{a}(v)$ by $x_{i}^{a}(v):=\Lambda _{i}^{N}\left(
q^{N_{0}}\right) $ for all $i\in N$ and $x_{0}^{a}(v)=0.$ In this allocation,
the distributors receive all the profit from cooperation with the farmer,
whereas the farmer receives nothing. Note that%
\begin{equation*}
\sum_{i\in N_{0}}x_{i}^{a}(v)=\sum_{i\in N}\Lambda _{i}^{N}\left(
q^{N_{0}}\right) =v(N_{0}).
\end{equation*}%
Hence, $x^{a}(v)$ is a efficient allocation. Next, consider a coalition $%
S\subseteq N.$ Then,%
\begin{equation*}
\sum_{i\in S}x_{i}^{a}(v)=\sum_{i\in S}\Lambda _{i}^{N}\left(
q^{N_{0}}\right) \geq \sum_{i\in S}\Pi _{i}^{S}\left( q^{S}\right) =v(S),
\end{equation*}%
where the inequality follows from property (iv) in Lemma \ref{lemma}.
Finally,

\begin{equation*}
\sum_{i\in S_{0}}x_{i}^{a}(v)=\sum_{i\in S}\Lambda _{i}^{N}\left(
q^{N_{0}}\right) \geq \sum_{i\in S}\Lambda _{i}^{S}\left( q^{S_{0}}\right)
=v(S_{0}),
\end{equation*}

where the last inequality follows from property (iii) in Lemma \ref{lemma}.
We conclude that $x^{a}(v)\in Core(N_{0},v)$, thus, the game is balanced.\hfill
\end{proof}
\bigskip

The allocation $x^{a}(v)$, which we will refer to as the altruistic
allocation, is inspired by and used in Guardiola et al. (2007) under the same name. The next
example illustrates a MDF-situation composed of two distributors and one
farmer. 

\begin{example}
\label{ejemplo 1}Consider a MDF-situation with 1 farmer and 2 distributors, 
where: $Q=3000,$ $C=3000,b(q)=5-\frac{q}{2000}$ with $q\in \lbrack 0,3000],$ 
$\overline{b}=0.2$ 
\begin{equation*}
p_{1}(q)=\left\{ 
\begin{array}{cc}
8-\frac{2q}{1000}, & 0\leq q\leq 2000, \\ 
4, & q>2000.%
\end{array}%
\right. \ \ \ ;\text{\ \ }t_{1}(q)=\left\{ 
\begin{array}{cc}
2-\frac{q}{10000}, & 0\leq q\leq 10000, \\ 
1, & q>10000.%
\end{array}%
\right.
\end{equation*}%
\begin{equation*}
p_{2}(q)=\left\{ 
\begin{array}{cc}
7-\frac{3q}{1000}, & 0\leq q\leq 1000, \\ 
3, & q>1000.%
\end{array}%
\right. \text{ \ \ };\text{\ \ }t_{2}(q)=\left\{ 
\begin{array}{cc}
1-\frac{q}{10000}, & 0\leq q\leq 5000, \\ 
0.5, & q>5000.%
\end{array}%
\right.
\end{equation*}

Solving the corresponding optimization problems for the different
coalitions, we obtain the optimal orders, the farmer's profit, and 
the characteristic function, as shown in Table \ref{table 1}.

\begin{table}[!h]
\begin{center}
\begin{tabular}{|c|c|c|c|c|c|}
$S$ & $q_{1}^{S}$ & $q_{2}^{S}$ & $r\left( S\right)$ & $r\left( S\right)-C$ & $v(S)$ \\ 
\hline
$\{1\}$ & $357.14$ & - & $1721.94$ & $-1278.06$ & $178.57$ \\ \hline
$\{2\}$ & - & $208.33$ & $1019.97$ & $-1980.03$ & $104.17$ \\ \hline
$\{0,1\}$ & $1368.42$ & - & $1694.74$ & $-1305.26$ & $2957.89$ \\ \hline
$\{0,2\}$ & - & $896.55$ & $1317.24$ & $-1682.76$ & $1731.03$ \\ \hline
$\{1,2\}$ & $466.24$ & $305.47$ & $3560.75$ & $+560.75$ & $385.85$ \\ \hline
$\{0,1,2\}$ & $1368.42$ & $896.55$ & $2411.98$ & $-588.02$ & $5288.92$ \\ \hline
\end{tabular}%
\caption{MDF-game for Example \ref{ejemplo 1}\label{table 1}}
\end{center}
\end{table}

Notice that it satisfy NDH since $q_{S}^{S},q_{S}^{S_{0}}<Q$ for all $S\subseteq N.$
Moreover, it does SC as $\overline{b}=0.2\leq \underset{S\subseteq N}{\min }\left\{ \left(
b(q_{S}^{S})-\frac{C}{Q}\right) \cdot \frac{q_{S}^{S}}{Q-q_{S}^{S}}\right\}
=\min \{0.516,0.29,1.251\}.$ The altruistic allocation is $x^{a}(v)=(0,3195.39,2093.53).$ 
\end{example}

As we can see in the above example, when players cooperate with the farmer, they
achieve a significant increase in their profits. However, the altruistic
distribution does not take into account the farmer's contribution to that
increase. Furthermore, although it is a stable allocation, it does not
consider the binding agreement that distributors sign with the agricultural
cooperative to share those benefits with the farmer. Therefore, in the next
section, we aim to find a stable allocation that compensates the farmer
for his contribution to the cooperative profit. 

\section{Cooperation and compensation to the farmer}

We now focus on finding an alternative core allocation of
profit that compensates the farmer for their contribution to the benefit of
the grand coalition. Our aim is that each distributor compensates the farmer with a share of the farmer's lower marginal contribution to the coalitions in which this distributor participates. This will result in potentially different compensation amounts from each
distributor. The reader may note that this distribution is different from the one proposed in Guardiola et al. (2007),
mgpc-solution, where each retailer contributed the same amount (the minimal
gain per capita) to compensate the supplier. 

On the basis of altruistic allocation, we propose the
following three desirable properties for a single-valued solution $\varphi $
on MDF-games $(N_{0},v)$\:

\begin{description}
\item[(EF)] Efficiency. $\sum_{i\in N_{0}}\varphi _{i}(v)=v(N_{0}).$

\item[(DR)] Distributors reduction. $\varphi _{i}(v)=x_{i}^{a}(v)-\frac{%
v(S_{0}^{i})-v(S^{i})}{\left\vert S^{i}\right\vert }$ for some coalition $%
S^{i}\subseteq N$ such that $i\in S^{i}$, for all $i\in N.$

\item[(MD)] Maximal compensation. $\varphi _{0}(v)\leq \sum_{i\in N}\underset%
{S\subseteq N:i\in S}{\min }\frac{v(S_{0})-v(S)}{\left\vert S\right\vert }$
\end{description}

\noindent \textit{Efficiency }implies that the overall benefit is
distributed among the players, while \textit{distributors reduction }%
property provides for a reduction for the distributor $i$ with respect to 
altruistic distribution. This reduction is proportional to the farmer's marginal contribution to coalitions of distributors that
include $i$, specifically $(v(S_{0}^{i})-v(S^{i}))/\left\vert
S^{i}\right\vert $, where $S^{i}$ represents some coalition of distributors.
This property is a modification of the \textit{retailers reduction} property
described in Guardiola et al. (2007). Finally, \textit{maximal compensation }property
states that a farmer can only be compensated up to the sum of their minimal
proportional part of marginal contributions to the sets of distributors.
\medskip

The following theorem states that there exists a unique allocation for
MDF-games that satisfies the EF, DR, and MD properties.

\begin{theorem}
\label{theo}Let $(N_{0},Q,C,b,T,P,\overline{b})$ be an MDF-situationwith SC and NDH, $(N_{0},v)$ be the corresponding
MDF-game and $q^{N_{0}}$ an optimal solution. The unique solution $\sigma $
on the class of MDF-games, satisfying EF, DR and MD is $\sigma (v)=\left(
\sigma _{i}(v)\right) _{i\in N_{0}}$ defined by 
\begin{equation*}
\sigma _{i}(v):=\left\{ 
\begin{array}{ll}
\Lambda _{i}^{N}\left( q^{N_{0}}\right) -\beta _{i}, & i\in N, \\ 
&  \\ 
\sum_{i\in N}\beta _{i}, & i=0,%
\end{array}%
\right.
\end{equation*}

where $\beta _{i}:=\underset{i\in S\subseteq N}{\min }\left\{ \frac{%
\sum_{k\in S}\Lambda _{k}^{S}\left( q^{S_{0}}\right) -\sum_{k\in S}\Pi
_{k}^{S}\left( q^{S}\right) }{\left\vert S\right\vert }\right\} $ for all $%
i\in N.$
\end{theorem}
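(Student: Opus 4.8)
The plan is to prove existence first (verify that the proposed $\sigma(v)$ satisfies EF, DR and MD), and then uniqueness (show that any $\varphi$ satisfying the three properties must coincide with $\sigma$). Throughout I will use the shorthand $\beta_i := \min_{i\in S\subseteq N}\frac{v(S_0)-v(S)}{|S|}$, noting that by Proposition's definition of the game this equals the quantity written in the statement, since $v(S_0)=\sum_{k\in S}\Lambda_k^S(q^{S_0})$ and $v(S)=\sum_{k\in S}\Pi_k^S(q^S)$; I will also write $x^a_i(v)=\Lambda_i^N(q^{N_0})$ for $i\in N$ and $x^a_0(v)=0$ as in the balancedness theorem.

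For \textbf{existence}, EF is a direct computation: $\sum_{i\in N_0}\sigma_i(v) = \sum_{i\in N}(\Lambda_i^N(q^{N_0})-\beta_i) + \sum_{i\in N}\beta_i = \sum_{i\in N}\Lambda_i^N(q^{N_0}) = v(N_0)$, using that $\sum_{i\in N}\Lambda_i^N(q^{N_0})=v(N_0)$ from the definition of the MDF-game and optimality of $q^{N_0}$. For DR, for each $i\in N$ let $S^i$ be a coalition attaining the minimum defining $\beta_i$; then $\sigma_i(v)=\Lambda_i^N(q^{N_0})-\beta_i = x^a_i(v) - \frac{v(S^i_0)-v(S^i)}{|S^i|}$, which is exactly the DR requirement. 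For MD, we need $\sigma_0(v)=\sum_{i\in N}\beta_i \le \sum_{i\in N}\min_{S\subseteq N: i\in S}\frac{v(S_0)-v(S)}{|S|}$, but the right-hand side is precisely $\sum_{i\in N}\beta_i$, so MD holds with equality. (It is worth remarking, and I would, that $\beta_i\ge 0$ for each $i$ because of property (iv) of Lemma \ref{lemma} applied termwise, so $\Lambda_k^S(q^{S_0})\ge\Lambda_k^S(q^S)\ge\Pi_k^S(q^S)$; this makes the allocation a genuine reduction and keeps $\sigma_0(v)\ge 0$, which is the intuitively desired ``compensation'' sign.)

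For \textbf{uniqueness}, suppose $\varphi$ satisfies EF, DR and MD. By DR, for each $i\in N$ there is a coalition $S^i\ni i$ with $\varphi_i(v)=x^a_i(v)-\frac{v(S^i_0)-v(S^i)}{|S^i|}\ge x^a_i(v)-\beta_i=\sigma_i(v)$, since $\frac{v(S^i_0)-v(S^i)}{|S^i|}\le \beta_i$ is false — wait, $\beta_i$ is the \emph{minimum}, so $\frac{v(S^i_0)-v(S^i)}{|S^i|}\ge\beta_i$, giving $\varphi_i(v)\le\sigma_i(v)$ for all $i\in N$. On the other hand, EF gives $\varphi_0(v)=v(N_0)-\sum_{i\in N}\varphi_i(v)\ge v(N_0)-\sum_{i\in N}\sigma_i(v)=\sigma_0(v)=\sum_{i\in N}\beta_i$. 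But MD says $\varphi_0(v)\le\sum_{i\in N}\beta_i$. Hence $\varphi_0(v)=\sum_{i\in N}\beta_i=\sigma_0(v)$, and then the inequality $\varphi_0(v)\ge v(N_0)-\sum_{i\in N}\sigma_i(v)$ must be an equality, forcing $\sum_{i\in N}\varphi_i(v)=\sum_{i\in N}\sigma_i(v)$; combined with $\varphi_i(v)\le\sigma_i(v)$ for every $i\in N$ this yields $\varphi_i(v)=\sigma_i(v)$ for all $i\in N$. Therefore $\varphi=\sigma$.

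I expect the only subtle point — the ``main obstacle'' — to be getting the direction of the inequalities right in the uniqueness argument: the interplay between DR (which only says $\varphi_i$ uses \emph{some} coalition, hence $\varphi_i\le\sigma_i$ because $\sigma$ uses the minimizing one) and MD (which caps $\varphi_0$ from above) is what pins everything down, and one must be careful that the two one-sided bounds squeeze to equality via EF rather than leaving slack. A minor bookkeeping point to state cleanly is the identification of the fraction in the statement of $\beta_i$ with $\frac{v(S_0)-v(S)}{|S|}$, which follows immediately from the definition of the MDF-game; I would note this once at the start so the rest of the argument can be phrased purely in terms of $v$.
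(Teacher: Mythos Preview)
Your proof is correct and follows essentially the same approach as the paper: verify that $\sigma$ satisfies EF, DR and MD, then for uniqueness use DR to write $\varphi_i(v)=x_i^a(v)-\alpha_i$ with $\alpha_i\ge\beta_i$, combine EF and MD to force $\sum_i\alpha_i=\sum_i\beta_i$, and conclude $\alpha_i=\beta_i$ for all $i$. Your write-up is in fact slightly more explicit than the paper's, which simply asserts that $\sum_i\alpha_i\le\sum_i\beta_i$ implies $\alpha_i=\beta_i$ without spelling out the squeeze.
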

\begin{proof}
It is clear that $\sigma (v)$ satisfies EF, DR and MD.

To show the converse, take a solution $\varphi $ on the class of MDF-games
that satisfies EF, DR and MD. By DR, $\varphi _{i}(v)=x_{i}^{a}(v)-\alpha
_{i}$ with $\alpha _{i}=(v(S_{0}^{i})-v(S^{i}))/\left\vert S^{i}\right\vert $
for some coalition $S^{i}\subseteq N$ with $i\in S^{i}$, for all distributor 
$i$. According to EF\ $\varphi _{0}(v)=\sum_{i\in N}\alpha _{i}.$ By MC, $%
\sum_{i\in N}\alpha _{i}\leq \sum_{i\in N}\underset{S\subseteq N:i\in S}{%
\min }\frac{v(S_{0})-v(S)}{\left\vert S\right\vert }$. Hence, $\alpha _{i}=%
\underset{S\subseteq N:i\in S}{\min }\frac{v(S_{0})-v(S)}{\left\vert
S\right\vert }$ for all $i\in N.$ We conclude\ $\varphi =\sigma .$\hfill
\end{proof}
\medskip

We will refer to this distribution as the Farmer Compensation
allocation (henceforth, FC-allocation), as each distributor
compensates the farmer for the increase in profits resulting from
cooperation in coalitions which the distributor participates. 
\medskip

Next we prove that properties used in Theorem \ref{theo} are logically
independent.

\begin{example}
(EF fails) Consider $\varphi $ on MDF-games $(N_{0},v)$ defined by 
\begin{equation*}
\varphi _{i}(v):=\left\{ 
\begin{array}{ll}
\Lambda _{i}^{N}\left( q^{N_{0}}\right) -\beta _{i}, & i\in N, \\ 
&  \\ 
0, & i=0,%
\end{array}%
\right.
\end{equation*}%
where $\beta _{i}:=\underset{i\in S\subseteq N}{\min }\left\{ \frac{%
\sum_{k\in S}\Lambda _{k}^{S}\left( q^{S_{0}}\right) -\sum_{k\in S}\Pi
_{k}^{S}\left( q^{S}\right) }{\left\vert S\right\vert }\right\} $ for all $%
i\in N.$ $\varphi (v)$ satisfies DR, MD, but not EF.
\end{example}

\begin{example}
(DR fails) Consider $\varphi $ on MDF-games $(N_{0},v)$ given by 
\begin{equation*}
\varphi _{i}(v):=\left\{ 
\begin{array}{ll}
\Lambda _{i}^{N}\left( q^{N_{0}}\right) -\beta _{i}-1, & i\in N, \\ 
&  \\ 
\sum_{i\in N}\left( \beta _{i}-1\right) , & i=0,%
\end{array}%
\right.
\end{equation*}%
where $\beta _{i}:=\underset{i\in S\subseteq N}{\min }\left\{ \frac{%
\sum_{k\in S}\Lambda _{k}^{S}\left( q^{S_{0}}\right) -\sum_{k\in S}\Pi
_{k}^{S}\left( q^{S}\right) }{\left\vert S\right\vert }\right\} $ for all $%
i\in N.$ $\varphi (v)$ satisfies EF, MD but not DR.
\end{example}

\begin{example}
(MD fails) Let $\varphi $ on MDF-games $(N_{0},v)$ defined by 
\begin{equation*}
\varphi _{i}(v):=\left\{ 
\begin{array}{ll}
\Lambda _{i}^{N}\left( q^{N_{0}}\right) -\beta _{i}^{\ast }, & i\in N, \\ 
&  \\ 
\sum_{i\in N}\beta _{i}^{\ast }, & i=0,%
\end{array}%
\right. 
\end{equation*}%
where $\beta _{i}^{\ast }:=\underset{i\in S\subseteq N}{\max }\left\{ \frac{%
\sum_{k\in S}\Lambda _{k}^{S}\left( q^{S_{0}}\right) -\sum_{k\in S}\Pi
_{k}^{S}\left( q^{S}\right) }{\left\vert S\right\vert }\right\} $ for all $%
i\in N.$ $\varphi (v)$ satisfies EF, DR but not MD.
\end{example}

In the following example we go back to the Example \ref{ejemplo 1}. and compare 
the altruistic with the FC-allocation. 

\begin{example}
Consider again the data in Example \ref{ejemplo 1}. {We compare the
FC-allocation with the altruistic allocation
in Table \ref{table 2}.}
\begin{table}[!h]
\begin{center}
\begin{tabular}{c|c}
$x^{a}(v)$ & $\sigma (v)$ \\ \hline
&   \\ 
$\left( 
\begin{array}{c}
$0$ \\ 
$3195.39$ \\ 
$2093.53$%
\end{array}%
\right) $ & $\left( 
\begin{array}{c}
$4078.4$ \\ 
$743.85$ \\ 
$466.67$%
\end{array}%
\right) $%
\end{tabular}%
\caption{Altruistic vs FC-allocation for Example \ref{ejemplo 1}\label{table 2}}
\end{center}%
\end{table}

We notice that the distributors' compensations are 

\begin{eqnarray*}
\beta _{1}:=\min \left\{ 2957.89-178.57,\frac{5288.92-385.85}{2}\right\} =\min \left\{ 2779.32,2451.54\right\} =2451.54; \\ 
\beta_{2}:=\min \left\{ 1731.03-104.17,\frac{5288.92-385.85}{2}\right\} =\min \left\{ 1626.86,2451.54\right\} =1626.86.
\end{eqnarray*}
The FC-allocation is stable, in the sense of the core, 
and  provides the farmer with a total revenue revenue of $r\left( N_{0}\right)-C+4078.4=3490.38$. The farmer is satisfied with the total revenue obtained from the cooperation with all distributors.
\end{example}

The following theorem shows that FC-allocation is always stable in the sense of
the core.

\begin{theorem}
Let $(N_{0},Q,C,b,T,P,\overline{b})$ be an MDF-situation with SC and NDH, and $(N_{0},v)$ be the corresponding
MDF-game. Then, $\sigma (v)\in Core(N_{0},v)$.
\end{theorem}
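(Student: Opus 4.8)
The plan is to check directly the three families of constraints defining $Core(N_{0},v)$, using Lemma \ref{lemma} and the Proposition above. Efficiency is already contained in Theorem \ref{theo}, where $\sigma$ is shown to satisfy EF; in any case it is immediate, since
\[
\sum_{i\in N_{0}}\sigma_{i}(v)=\sum_{i\in N}\bigl(\Lambda_{i}^{N}(q^{N_{0}})-\beta_{i}\bigr)+\sum_{i\in N}\beta_{i}=\sum_{i\in N}\Lambda_{i}^{N}(q^{N_{0}})=v(N_{0}).
\]
Recalling that $v(S_{0})=\sum_{k\in S}\Lambda_{k}^{S}(q^{S_{0}})$ and $v(S)=\sum_{k\in S}\Pi_{k}^{S}(q^{S})$, one may write $\beta_{i}=\min_{i\in S\subseteq N}\frac{v(S_{0})-v(S)}{|S|}$. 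Before treating the coalitional inequalities I would isolate three elementary facts: (a) each $\beta_{i}\geq 0$, because by part (i) of the Proposition $v(S_{0})\geq v(S)>0$ for all nonempty $S\subseteq N$, so every term in the minimum defining $\beta_{i}$ is nonnegative; (b) for any nonempty $S\subseteq N$ and any $i\in S$ one has $\beta_{i}\leq\frac{v(S_{0})-v(S)}{|S|}$, since this particular coalition $S$ is among those over which the minimum defining $\beta_{i}$ is taken, whence $\sum_{i\in S}\beta_{i}\leq v(S_{0})-v(S)$; (c) by property (iii) of Lemma \ref{lemma}, $\Lambda_{i}^{N}(q^{N_{0}})\geq\Lambda_{i}^{S}(q^{S_{0}})$ for every $i\in S$, so $\sum_{i\in S}\Lambda_{i}^{N}(q^{N_{0}})\geq\sum_{i\in S}\Lambda_{i}^{S}(q^{S_{0}})=v(S_{0})$.

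Given these facts the remaining constraints follow by short computations. For a coalition $S\subseteq N$ that does not contain the farmer, combining (c) and (b),
\[
\sum_{i\in S}\sigma_{i}(v)=\sum_{i\in S}\Lambda_{i}^{N}(q^{N_{0}})-\sum_{i\in S}\beta_{i}\geq v(S_{0})-\bigl(v(S_{0})-v(S)\bigr)=v(S).
\]
For a coalition $S_{0}=S\cup\{0\}$, using $\sigma_{0}(v)=\sum_{i\in N}\beta_{i}$ together with (c) and (a),
\[
\sum_{i\in S_{0}}\sigma_{i}(v)=\sum_{i\in S}\Lambda_{i}^{N}(q^{N_{0}})+\sum_{i\in N\setminus S}\beta_{i}\geq v(S_{0})+0=v(S_{0}).
\]
The degenerate cases $S=\emptyset$, $S=N$, $S_{0}=\{0\}$ and $S_{0}=N_{0}$ are covered verbatim by the same expressions, so all core constraints hold and $\sigma(v)\in Core(N_{0},v)$.

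The argument is essentially bookkeeping once the right inputs are named; the one point that deserves care is fact (b). Although the FC-allocation transfers to the farmer the full amount $\sum_{i\in N}\beta_{i}$, the stand-alone stability condition $\sum_{i\in S}x_{i}\geq v(S)$ for a group of distributors survives precisely because the compensation $\beta_{i}$ charged to distributor $i$ never exceeds his per-capita share of the farmer's marginal contribution to that same coalition $S$; this is what yields $\sum_{i\in S}\beta_{i}\leq v(S_{0})-v(S)$ and thereby keeps the distributors' block above $v(S)$. Beyond this observation I anticipate no genuine obstacle.
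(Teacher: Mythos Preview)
Your proof is correct and follows essentially the same approach as the paper's own argument: both verify efficiency directly, then handle the $S_{0}$-constraints via $\beta_{i}\geq 0$ together with Lemma \ref{lemma}(iii), and the $S$-constraints by taking $R=S$ in the minimum defining $\beta_{i}$ so that $\sum_{i\in S}\beta_{i}\leq v(S_{0})-v(S)$, again combined with Lemma \ref{lemma}(iii). Your upfront labeling of facts (a)--(c) is a slightly cleaner packaging, but the underlying steps are the same.
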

\begin{proof}
To prove that $\sigma (v)\in Core(N_{0},v)$ we need to satisfy three
conditions: (1) $\sum_{i\in N_{0}}\sigma _{i}(v)=v(N_{0}),$ (2) $\sum_{i\in
S_{0}}\sigma _{i}(v)\geq v(S_{0})$ and (3) $\sum_{i\in S}\sigma _{i}(v)\geq
v(S)$ for all $S\subset N.$

It easy to check that $\sigma (v)$ satisfy (1) :

\begin{eqnarray*}
\sum_{i\in N_{0}}\sigma _{i}(v) &=&\sigma _{0}(v)+\sum_{i\in N}\sigma
(v)=\sum_{i\in N}\beta _{i}+\sum_{i\in N}\left( \Lambda _{i}^{N}\left(
q^{N_{0}}\right) -\beta _{i}\right) \\
&=&\sum_{i\in N}\beta _{i}+\sum_{i\in N}\Lambda _{i}^{N}\left(
q^{N_{0}}\right) -\sum_{i\in N}\beta _{i}=\sum_{i\in N}\Lambda
_{i}^{N}\left( q^{N_{0}}\right) =v(N_{0})
\end{eqnarray*}

Take $\emptyset \neq S\varsubsetneq N$, then (2) it is satisfy since:

\begin{eqnarray*}
\sum_{i\in S_{0}}\sigma _{i}(v) &=&\sigma _{0}(v)+\sum_{i\in S}\sigma
_{i}(v)=\sum_{i\in N}\beta _{i}+\sum_{i\in S}\left( \Lambda _{i}^{N}\left(
q^{N_{0}}\right) -\beta _{i}\right) \\
&=&\sum_{i\in N}\beta _{i}+\sum_{i\in S}\Lambda _{i}^{N}\left(
q^{N_{0}}\right) -\sum_{i\in S}\beta _{i}=\sum_{i\in S}\Lambda
_{i}^{N}\left( q^{N_{0}}\right) +\sum_{i\in N\backslash S}\beta _{i} \\
&\geq &\sum_{i\in S}\Lambda _{i}^{N}\left( q^{N_{0}}\right) \geq \sum_{i\in
S}\Lambda _{i}^{S}\left( q^{S_{0}}\right) =v(S_{0})
\end{eqnarray*}

and also (3):

\begin{eqnarray*}
\sum_{i\in S}\sigma _{i}(v) &=&\sum_{i\in S}\left( \Lambda _{i}^{N}\left(
q^{N_{0}}\right) -\beta _{i}\right) =\sum_{i\in S}\Lambda _{i}^{N}\left(
q^{N_{0}}\right) -\sum_{i\in S}\beta _{i} \\
&=&\sum_{i\in S}\Lambda _{i}^{N}\left( q^{N_{0}}\right) -\sum_{i\in S}%
\underset{i\in R\subseteq N}{\min }\left\{ \frac{\sum_{k\in R}\Lambda
_{k}^{S}\left( q^{S_{0}}\right) -\sum_{k\in R}\Pi _{k}^{R}\left(
q^{R}\right) }{\left\vert R\right\vert }\right\} \\
&\geq &\sum_{i\in S}\Lambda _{i}^{N}\left( q^{N_{0}}\right) -\sum_{i\in S}%
\frac{\sum_{k\in S}\Lambda _{k}^{S}\left( q^{S_{0}}\right) -\sum_{k\in S}\Pi
_{k}^{S}\left( q^{S}\right) }{\left\vert S\right\vert } \\
&=&\sum_{i\in S}\Lambda _{i}^{N}\left( q^{N_{0}}\right) -\left\vert
S\right\vert \frac{\sum_{k\in S}\Lambda _{k}^{S}\left( q^{S_{0}}\right)
-\sum_{k\in S}\Pi _{k}^{S}\left( q^{S}\right) }{\left\vert S\right\vert } \\
&\geq&\sum_{i\in S}\Pi _{i}^{S}\left( q^{S}\right) =v(S).
\end{eqnarray*}\hfill
\end{proof}

The next example illustrates an MDF-situation consisting of one
farmer and three distributors, where the farmer is not satisfied 
with the revenue obtained from the total cooperation with the distributors.

\begin{example}
\label{ejemplo 3}Consider an MDF-situation with a farmer and 3 distributor where: $Q=8000,$ $C=4000,b(q)=3-\frac{q}{4000}$ 
with $q\in \lbrack 0,6000]$ and $b(q)=1.5$ with $q\in (6000,8000],$ 
$\overline{b}=0.2$ 
\begin{equation*}
p_{1}(q)=\left\{ 
\begin{array}{cc}
8-\frac{4q}{2500}, & 0\leq q\leq 2500, \\
4, & q>2500.%
\end{array}%
\right. \ \ \ ;\ \ t_{1}(q)=\left\{ 
\begin{array}{cc}
1-\frac{0.5q}{2500}, & 0\leq q\leq 2500, \\ 
0.5, & q>2500.%
\end{array}%
\right.
\end{equation*}%
\begin{equation*}
p_{2}(q)=\left\{ 
\begin{array}{cc}
9.85, & 0\leq q\leq 100, \\
5-\frac{1.5q}{1000} + \frac{50}{\sqrt{q}}, & 100< q\leq 1850, \\
3.38, & q>1850.
\end{array}%
\right. \text{ \ \ };\text{\ \ }t_{2}(q)=\left\{ 
\begin{array}{cc}
2-\frac{0.5q}{1000}, & 0\leq q\leq 2000, \\ 
1, & q>2000.%
\end{array}%
\right.
\end{equation*}
\begin{equation*}
p_{3}(q)=\left\{ 
\begin{array}{cc}
9-\frac{4.5q}{1000}, & 0\leq q\leq 1000, \\ 
\frac{4500}{q}, & q>1000.%
\end{array}%
\right. \text{ \ \ };\text{\ \ }t_{3}(q)=\left\{ 
\begin{array}{cc}
1-\frac{0.5q}{1000}, & 0\leq q\leq 1000, \\ 
0.5, & q>1000.%
\end{array}%
\right.
\end{equation*}

Solving the corresponding optimization problems for the different
coalitions, we obtain the optimal orders, the farmer's profit and the characteristic 
function of the game as shown in Table \ref{table 3}.

\begin{table}[!h]
\begin{center}
\begin{tabular}{|c|c|c|c|c|c|c|}
$S$ & $q_{1}^{S}$ & $q_{2}^{S}$ & $q_{3}^{S}$ & $r\left( S\right)$ & $r\left( S\right)-C$ & $v(S)$ \\
\hline
$\{1\}$ & $1739.13$  & - &  - & $4461.24$ & $-3538.76$ & $3478.26$ \\ \hline
$\{2\}$ & -  & $652.47$ & - & $1851$  & $-6149$ & $957.88$ \\ \hline
$\{3\}$ & - & - & $666.66$ & $1888.88$ & $-6111.12$ & $1666.66$ \\ \hline
$\{0,1\}$ & $2392.85$ & - & - & $2317.85$ & $-5682.15$ & $6416.07$\\ \hline
$\{0,2\}$ & - & $1657.07$ & - & $2097.12$ & $-5902.88$ & $2163.56$ \\ \hline
$\{0,3\}$ & - & - & $962.50$ & $1888.75$ & $-6111.25$ & $2105.62$\\ \hline
$\{1,2\}$ & $1989.99$  & $1153.95$ & - & $6960.74$  & $-1039.26$ & $5253.85$\\ \hline
$\{1,3\}$ & $1911.76$ & - & $794.11$  & $6287.19$ & $-1712.81$ & $5808.82$\\ \hline
$\{2,3\}$ & -  & $821.85$ & $721.45$ & $4034.48$ & $-3965.52$ & $2878.69$\\ \hline
$\{0,1,2\}$ & $2392.85$ & $1657.07$ & - & $2814.97$ & $-5185.03$ & $10179.63$\\ \hline
$\{0,1,3\}$ & $2392.85$ & - & $962.50$ & $2606.60$ & $-5393.40$ & $10121.69$ \\ \hline
$\{0,2,3\}$ & - &  $1657.07$ & $962.50$ & $2385.87$ & $-5614.13$ & $5869.18$ \\ \hline
$\{1,2,3\}$ & $2262.69$ & $1491.44$ & $916.94$ & $8558.49$ & $+558.49$ & $8265.96$\\ \hline
$\{0,1,2,3\}$ & $2392.85$ &  $1657.07$ & $962.50$ & $3103.72$ & $-4896.28$ & $13885.26$ \\ \hline
\end{tabular}%
\caption{MDF-game for Example \ref{ejemplo 3}\label{table 3}}
\end{center}
\end{table}

Notice that it satisfies NDH since $q_{S}^{S},q_{S}^{S_{0}}<Q$, for all $S\subseteq N.$
Moreover, it does SC as $\overline{b}=0.2\leq \underset{S\subseteq N}{\min }\left\{ \left(
b(q_{S}^{S})-\frac{C}{Q}\right) \cdot \frac{q_{S}^{S}}{Q-q_{S}^{S}}\right\}
=\min \{0.573,0.207,0.212,1.109,0.932,0.505,1.869\}$.

We now compare the FC-allocation with the altruistic allocation in Table \ref{table 4}, where the distributors' compensations are: 
\begin{eqnarray*}
\beta _{1}:=\min \left\{ 2937.81,2462.89,2156.44,1873.10\right\}=1873.10; \\
\beta _{2}:=\min \left\{ 1205.68,2462.89,1495.25,1873.10\right\}=1205.68; \\
\beta _{3}:=\min \left\{ 438.96,2156.44,1495.25,1873.10\right\}=438.96.
\end{eqnarray*}

\begin{table}[!h]
\begin{center}
\begin{tabular}{c|c}
$x^{a}(v)$ & $\sigma (v)$ \\ \hline
&   \\ 
$\left( 
\begin{array}{c}
$0$ \\ 
$7252.25$ \\ 
$3234.62$ \\
$3398.39$%
\end{array}%
\right) $ & $\left( 
\begin{array}{c}
$3517.74$ \\
$5379.15$ \\ 
$2028.94$ \\
$2959.43$%
\end{array}%
\right) $%
\end{tabular}%
\caption{Altruistic vs FC-allocation for Example \ref{ejemplo 3} \label{table 4}}
\end{center}%
\end{table}

The FC-allocation provides the farmer with a total revenue of $r\left( N_{0}\right)-C+3517.74=-1378.54$. Unfortunately, in this situation the farmer is not satisfied with the (negative) revenue of $r\left( N\right)-C=-4896.28$ obtained from the cooperation with all the distributors. If he were to stay out of this cooperation, he would get a (positive) revenue of  $r\left( N\right)-C=+558.49$.
\end{example}

This unsatisfactory situation for the farmer may be due to the fact that he is a passive agent in MDF-games, as only the profit from sales to customers in the corresponding markets is analyzed. This can lead to situations where the farmer may not be interested in cooperating with distributors, even if he receives compensation from  each of them through the FC-allocation.

\section{Incentives for the farmer to cooperate}

To complete our study of cooperation and compensation, we explore the conditions that can lead to dissatisfaction on the part of the farmer when cooperating. We then focus on designing an allocation that properly incentivizes and rewards the farmer for cooperating. These conditions could be seen as set by the agricultural cooperative or as part of strategic behavior on the part of the distributors to maintain and consolidate business relationships in the short and medium term. 

We first focus on determining when an MDF-situation satisfies $\underset{S\subseteq N}{\max }\{r(S),r(S_{0})\}=r(N_{0})$. This means that the farmer would always be satisfied cooperating with all distributors as he achieves his maximum revenue.  The next proposition characterizes these types of situations in terms of the compensation parameter $\overline{b}.$

\begin{proposition}
Let $(N_{0},Q,C,b,T,P,\overline{b})$ be an MDF-situation with SC and NDH, and $(N_{0},v)$ be the
corresponding MDF-game. Then, $\underset{S\subseteq N}{\max }%
\{r(S),r(S_{0})\}=r(N_{0})$ if and only if 
\begin{equation*}
\frac{1}{(Q-q_{N}^{N_{0}})}\cdot \left( \underset{S\subseteq N}{\max }%
\left\{ b(q_{S}^{S})\cdot q_{S}^{S}\right\} -\frac{C}{Q}\cdot
q_{N}^{N_{0}}\right) \leq \overline{b}\leq \frac{C}{Q}.
\end{equation*}
\end{proposition}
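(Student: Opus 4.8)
The plan is to reduce the claimed equality to two one-sided inequalities and then rewrite each of them. Since $r(N_{0})$ is itself one of the competing values---it equals $r(S_{0})$ for the choice $S=N$---we always have $\max_{S\subseteq N}\{r(S),r(S_{0})\}\ge r(N_{0})$. Hence the stated equality holds if and only if both (a) $r(S)\le r(N_{0})$ for every $S\subseteq N$ and (b) $r(S_{0})\le r(N_{0})$ for every $S\subseteq N$. I will then show that (a) is equivalent to the left inequality of the statement, while (b) is equivalent to $\overline{b}\le C/Q$; assembling the two equivalences gives the proposition.

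For (b), I would start from the definition $r(S_{0})=\frac{C}{Q}\,q_{S}^{S_{0}}+\overline{b}\,(Q-q_{S}^{S_{0}})$, which I rewrite as $r(S_{0})=\overline{b}\,Q+\bigl(\tfrac{C}{Q}-\overline{b}\bigr)q_{S}^{S_{0}}$, and use Lemma~\ref{lemma}(vi): since $q_{i}^{S_{0}}=q_{i}^{N_{0}}$ for all $i\in S$, we get $q_{S}^{S_{0}}=\sum_{i\in S}q_{i}^{N_{0}}\le\sum_{i\in N}q_{i}^{N_{0}}=q_{N}^{N_{0}}$, and $q_{N}^{N_{0}}>0$ because the assumption $p_{i}(0)>t_{i}(0)+b(0)$ forces each $q_{i}^{N_{0}}>0$. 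Therefore
\[
r(S_{0})-r(N_{0})=\Bigl(\tfrac{C}{Q}-\overline{b}\Bigr)\bigl(q_{S}^{S_{0}}-q_{N}^{N_{0}}\bigr).
\]
If $\overline{b}\le C/Q$ the right-hand side is $\le 0$ for every $S$, so (b) holds. Conversely, evaluating (b) at $S=\emptyset$ (where $q_{\emptyset}^{\emptyset_{0}}=0$, so $r(\emptyset_{0})=\overline{b}Q$; any proper subcoalition would serve just as well) forces $\bigl(\tfrac{C}{Q}-\overline{b}\bigr)q_{N}^{N_{0}}\ge 0$, and since $q_{N}^{N_{0}}>0$ this yields $\overline{b}\le C/Q$.

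For (a), I would substitute $r(S)=b(q_{S}^{S})\,q_{S}^{S}$ and $r(N_{0})=\frac{C}{Q}q_{N}^{N_{0}}+\overline{b}(Q-q_{N}^{N_{0}})$, so that (a) reads $\max_{S\subseteq N}\{b(q_{S}^{S})q_{S}^{S}\}-\frac{C}{Q}q_{N}^{N_{0}}\le\overline{b}\,(Q-q_{N}^{N_{0}})$. By NDH we have $Q-q_{N}^{N_{0}}>0$, so dividing by $Q-q_{N}^{N_{0}}$ produces precisely
\[
\frac{1}{Q-q_{N}^{N_{0}}}\Bigl(\max_{S\subseteq N}\{b(q_{S}^{S})q_{S}^{S}\}-\frac{C}{Q}\,q_{N}^{N_{0}}\Bigr)\le\overline{b},
\]
which is the left inequality of the statement. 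Combining this with the equivalence obtained for (b) and with the reduction of the first paragraph completes the proof.

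This argument is essentially bookkeeping, so I do not expect a real obstacle. The points that need a little care are: (i) noticing that the equality collapses to the two one-sided bounds (a) and (b) because $r(N_{0})$ is among the maximized quantities; (ii) invoking Lemma~\ref{lemma}(vi) to compare $q_{S}^{S_{0}}$ with $q_{N}^{N_{0}}$ rather than re-examining the farmer-inclusive optimization directly; and (iii) verifying the sign conditions $Q-q_{N}^{N_{0}}>0$ (from NDH, needed to divide in part (a)) and $q_{N}^{N_{0}}>0$ (needed for the converse direction of part (b)).
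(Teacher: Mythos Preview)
Your proposal is correct and follows essentially the same route as the paper: reduce the equality to the two families of inequalities $r(N_{0})\ge r(S)$ and $r(N_{0})\ge r(S_{0})$, use NDH to divide by $Q-q_{N}^{N_{0}}$ in the first, and invoke Lemma~\ref{lemma}(vi) to turn the second into $\overline{b}\le C/Q$. You are in fact a little more explicit than the paper about why $r(N_{0})$ is among the competing values and about the positivity of $q_{N}^{N_{0}}$ needed for the converse of~(b), but the argument is the same.
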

\begin{proof}
We may observe that $\underset{S\subseteq N}{\max }%
\{r(S),r(S_{0})\}=r(N_{0})$ is equivalent to $r(N_{0})\geq r(S)$ and $r(N_{0})\geq r(S_{0})$ for all $S\subseteq N$. Then,
Take a coalition $S\subseteq N.$ We first prove under which conditions it is
satisfied that $r(N_{0})\geq r(S):$%
\begin{eqnarray*}
\frac{C}{Q}\cdot q_{N}^{N_{0}}+\overline{b}\cdot (Q-q_{N}^{N_{0}}) &\geq
&b(q_{S}^{S})\cdot q_{S}^{S}; \\
\overline{b}\cdot (Q-q_{N}^{N_{0}}) &\geq &b(q_{S}^{S})\cdot q_{S}^{S}-\frac{%
C}{Q}\cdot q_{N}^{N_{0}}; \\
\overline{b} &\geq &\frac{1}{(Q-q_{N}^{N_{0}})}\left( b(q_{S}^{S})\cdot
q_{S}^{S}-\frac{C}{Q}\cdot q_{N}^{N_{0}}\right) .
\end{eqnarray*}

Furthermore, if we also want $r(N_{0})\geq r(S_{0})$, the following
condition must be satisfied:%
\begin{eqnarray*}
\frac{C}{Q}\cdot q_{N}^{N_{0}}+\overline{b}\cdot (Q-q_{N}^{N_{0}}) &\geq &%
\frac{C}{Q}\cdot q_{S}^{S_{0}}+\overline{b}\cdot (Q-q_{S}^{S_{0}}); \\
\frac{C}{Q}\cdot q_{N}^{N_{0}}+\overline{b}\cdot (Q-q_{N}^{N_{0}}) &\geq &%
\frac{C}{Q}\cdot q_{S}^{N_{0}}+\overline{b}\cdot (Q-q_{S}^{N_{0}}); \\
\frac{C}{Q}\cdot q_{N\backslash S}^{N_{0}}-\overline{b}\cdot q_{N\backslash
S}^{N_{0}} &\geq &0; \\
\overline{b} &\leq &\frac{C}{Q}.
\end{eqnarray*}

In the first equivalence, we have used property (vi) from Lemma \ref{lemma}.\hfill
\end{proof}
\medskip 

The condition of the above proposition states that the compensation cost is at least equal to the increase in income per kilo unsold when the farmer cooperates with all distributors, and at most equal to the unit cost price of the harvest. As a result, the farmer would obtain the highest possible benefit, which is $r(N_0)-C$. The farmer would have incentives to cooperate with all distributors, and the FC-allocation would be satisfactory for both the farmer and the distributors.

Secondly, we concentrate on those MDF-situations that satisfy either $\overline{b}<\frac{1}{(Q-q_{N}^{N_{0}})}\cdot \left( \underset{S\subseteq N}{\max }\left\{ b(q_{S}^{S})\cdot
q_{S}^{S}\right\} -\frac{C}{Q}\cdot q_{N}^{N_{0}}\right) $ or $\overline{b}>\frac{C}{Q}.$ That is, those situations where the farmer's revenue, when cooperating with all distributors, is lower than the maximum revenue obtained with any group of distributors: $r(N_{0})<\underset{S\subseteq N}{\max }\{r(S),r(S_{0})\}.$ Therefore, we define a new allocation also based on the altruistic
allocation. 

\begin{definition}
Let $(N_{0},Q,C,b,T,P,\overline{b})$ be an MDF-situation with SC and NDH. We define the allocation $%
\theta (v)=\left( \theta _{i}(v)\right) _{i\in N_{0}}$ as follows:
\begin{equation*}
\theta _{i}(v):=\left\{ 
\begin{array}{ll}
\Lambda _{i}^{N}\left( q^{N_{0}}\right) -\alpha _{i}, & i\in N, \\ 
&  \\ 
\sum_{i\in N}\alpha _{i}, & i=0,%
\end{array}%
\right. 
\end{equation*}%
where $\alpha _{i}:=\frac{q_{i}^{N_{0}}}{q_{N}^{N_{0}}}\cdot \underset{%
S\subseteq N}{\max }\{r(S),r(S_{0})\}-\left( \frac{C}{Q}-\overline{b}\right)
\cdot q_{i}^{N_{0}}-\overline{b}\cdot Q\cdot \frac{q_{i}^{N_{0}}}{%
q_{N}^{N_{0}}}.$
\end{definition}

Note that $\alpha _{i}:=\frac{q_{i}^{N_{0}}}{q_{N}^{N_{0}}}\cdot \left( \underset{S\subseteq N}{\max }\{r(S),r(S_{0})\} - r(N_{0}) \right)$, and so if  $\underset{S\subseteq N}{\max }%
\{r(S),r(S_{0})\}=r(N_{0}),$ then $\alpha _{i}=0$ for all $i\in N.$ Hence, $%
\theta (v)=x^{a}(v).$ 

\bigskip
The following proposition shows that the previous allocation always compensates the farmer in such a way that his revenue from collaborating with all distributors is equal to the highest revenue he could achieve with any group of distributors.  

\begin{proposition}
Let $(N_{0},Q,C,b,T,P,\overline{b})$ be an MDF-situation with SC and NDH, and $(N_{0},v)$ be the
corresponding MDF-game. Then, $r(N_{0})+\theta _{0}(v)=\underset{S\subseteq N%
}{\max }\{r(S),r(S_{0})\}.$
\end{proposition}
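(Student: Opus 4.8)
The plan is to compute $r(N_0) + \theta_0(v)$ directly from the definitions and show it telescopes to $\max_{S\subseteq N}\{r(S),r(S_0)\}$. First I would recall that $r(N_0) = \frac{C}{Q}\cdot q_N^{N_0} + \overline{b}\cdot(Q - q_N^{N_0})$ by the definition of $r(S_0)$ applied to $S = N$, and that $\theta_0(v) = \sum_{i\in N}\alpha_i$ where $\alpha_i = \frac{q_i^{N_0}}{q_N^{N_0}}\cdot\max_{S\subseteq N}\{r(S),r(S_0)\} - \left(\frac{C}{Q} - \overline{b}\right)\cdot q_i^{N_0} - \overline{b}\cdot Q\cdot\frac{q_i^{N_0}}{q_N^{N_0}}$.

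The key step is to sum $\alpha_i$ over $i\in N$. Since $\sum_{i\in N} q_i^{N_0} = q_N^{N_0}$, the first term sums to $\max_{S\subseteq N}\{r(S),r(S_0)\}$, the second term sums to $\left(\frac{C}{Q} - \overline{b}\right)\cdot q_N^{N_0}$, and the third term sums to $\overline{b}\cdot Q$ (because $\sum_{i\in N}\frac{q_i^{N_0}}{q_N^{N_0}} = 1$). Hence
\begin{equation*}
\theta_0(v) = \max_{S\subseteq N}\{r(S),r(S_0)\} - \left(\frac{C}{Q} - \overline{b}\right)\cdot q_N^{N_0} - \overline{b}\cdot Q.
\end{equation*}
Then I would add $r(N_0) = \frac{C}{Q}\cdot q_N^{N_0} + \overline{b}\cdot Q - \overline{b}\cdot q_N^{N_0}$ and observe that the terms $-\left(\frac{C}{Q} - \overline{b}\right)\cdot q_N^{N_0} = -\frac{C}{Q}q_N^{N_0} + \overline{b}q_N^{N_0}$ and $-\overline{b}\cdot Q$ exactly cancel the corresponding parts of $r(N_0)$, leaving $r(N_0) + \theta_0(v) = \max_{S\subseteq N}\{r(S),r(S_0)\}$.

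There is essentially no obstacle here — the statement is a bookkeeping identity that follows once the definition of $\theta$ is unpacked. The only point worth a remark is that the remark immediately preceding the proposition already observes $\alpha_i = \frac{q_i^{N_0}}{q_N^{N_0}}\cdot\left(\max_{S\subseteq N}\{r(S),r(S_0)\} - r(N_0)\right)$; using this simplified form, the proof is even shorter: $\theta_0(v) = \sum_{i\in N}\alpha_i = \max_{S\subseteq N}\{r(S),r(S_0)\} - r(N_0)$ directly, and adding $r(N_0)$ to both sides finishes it. I would present the short version, perhaps with a one-line verification that the simplified form of $\alpha_i$ is correct (i.e. that $\left(\frac{C}{Q} - \overline{b}\right)\cdot q_i^{N_0} + \overline{b}\cdot Q\cdot\frac{q_i^{N_0}}{q_N^{N_0}} = \frac{q_i^{N_0}}{q_N^{N_0}}\cdot r(N_0)$, which is immediate from factoring out $\frac{q_i^{N_0}}{q_N^{N_0}}$ and the expression for $r(N_0)$).
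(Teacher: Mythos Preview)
Your proposal is correct and follows essentially the same approach as the paper's proof: the paper also expands $r(N_0)+\theta_0(v)$ directly, sums the $\alpha_i$ using $\sum_{i\in N} q_i^{N_0}=q_N^{N_0}$, and observes that all the $\frac{C}{Q}$ and $\overline{b}$ terms cancel to leave $\max_{S\subseteq N}\{r(S),r(S_0)\}$. Your additional remark about using the simplified form $\alpha_i = \frac{q_i^{N_0}}{q_N^{N_0}}\bigl(\max_{S\subseteq N}\{r(S),r(S_0)\}-r(N_0)\bigr)$ is a nice shortcut that the paper does not take explicitly in the proof itself (though it notes this identity just before the proposition).
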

\begin{proof}
\begin{eqnarray*}
r(N_{0})+\theta _{0}(v) &=&\frac{C}{Q}\cdot q_{N}^{N_{0}}+\overline{b}\cdot
(Q-q_{N}^{N_{0}})+\sum_{i\in N}\alpha _{i} \\
&=&\frac{C}{Q}\cdot q_{N}^{N_{0}}+\overline{b}\cdot (Q-q_{N}^{N_{0}}) \\
&&+\frac{q_{N}^{N_{0}}}{q_{N}^{N_{0}}}\cdot \underset{S\subseteq N}{\max }%
\{r(S),r(S_{0})\}-\left( \frac{C}{Q}-\overline{b}\right) \cdot q_{N}^{N_{0}}-%
\overline{b}\cdot Q\cdot \frac{q_{N}^{N_{0}}}{q_{N}^{N_{0}}} \\
&=&\underset{S\subseteq N}{\max }\{r(S),r(S_{0})\}.
\end{eqnarray*}\hfill
\end{proof}
\medskip 

Notice that the allocation $\theta (v)$ compensates the farmer with the minimum amount necessary for him to feel satisfied with the cooperation with all distributors. With this allocation rule, the distributors provide compensation proportional to their order sizes in the grand coalition ($N_{0})$. Therefore, from now on, we will refer to this distribution
as the minimal proportional compensation allocation (henceforth, MPC-allocation). 

In Table \ref{table 5} we compare the altruistic allocations and the FC-allocation  with MDF-allocation for Example \ref{ejemplo 1}.

\begin{table}[!h]
\begin{center}
\begin{tabular}{c|c|c}
$x^{a}(v)$ & $\sigma (v)$ & $\theta (v)$\\ \hline
&   \\ 
$\left( 
\begin{array}{c}
$0$ \\ 
$3195.39$ \\ 
$2093.53$%
\end{array}%
\right) $ & $\left( 
\begin{array}{c}
$4078.4$ \\ 
$743.85$ \\ 
$466.67$%
\end{array}%
\right) $%
&$\left( 
\begin{array}{c}
$1148.78$ \\ 
$2501.34$ \\ 
$1638.81$%
\end{array}%
\right) $%
\end{tabular}%
\caption{Comparison of all allocations for Example \ref{ejemplo 1}\label{table 5}}
\end{center}%
\end{table}

As we can see, the MPC-allocation also belongs to the core and satisfies the farmer, $r(N_{0})+1148.78=\underset{S\subseteq N%
}{\max }\{r(S),r(S_{0})\}=r(\{1,2\})=3560.75$, with a minimum redcution of profits for the distributors. In this example, the farmer would prefer an allocation according to FC-allocation. 

On the other hand, in Table \ref{table 6} for Example \ref{ejemplo 3}, we have that  $r(N_{0})+5454.77=\underset{S\subseteq N%
}{\max }\{r(S),r(S_{0})\}=r(\{1,2,3\})=8558.49$. Moreover, $\theta (v)$ also belongs to the core.

\begin{table}[!h]
\begin{center}
\begin{tabular}{c|c|c}
$x^{a}(v)$ & $\sigma (v)$ & $\theta (v)$ \\ \hline
&   \\ 
$\left( 
\begin{array}{c}
$0$ \\ 
$7252.25$ \\ 
$3234.62$ \\
$3398.39$%
\end{array}%
\right) $ & $\left( 
\begin{array}{c}
$3517.74$ \\
$5379.15$ \\ 
$2028.94$ \\
$2959.43$%
\end{array}%
\right) $%
& $\left( 
\begin{array}{c}
$5454.77$ \\
$4588.14$ \\ 
$1528.02$ \\
$2314.33$%
\end{array}%
\right) $%
\end{tabular}%
\caption{Comparison of all allocations for Example \ref{ejemplo 3}\label{table 6}}
\end{center}%
\end{table} 

Although the compensation with the MPC-allocation is always satisfactory for the farmer, unfortunately it does not always lead to a stable allocation in terms of the core. The next example shows that the MPC-allocation is not a core allocation in general.

\begin{example}
\label{ejemplo 5}Consider a MDF-situation with 2 distributor and 1
farmer where: $Q=8000,$ $C=4000,b(q)=3-\frac{q}{4000}$ with $q\in \lbrack
0,6000]$ and $b(q)=1.5$ with $q\in (6000,8000],$ $\overline{b}=0.01$ 
\begin{equation*}
p_{1}(q)=\left\{ 
\begin{array}{cc}
8-\frac{5q}{3000}, & 0\leq q\leq 2400, \\ 
4, & q>2400.%
\end{array}%
\right. \ \ ;\ \ t_{1}(q)=\left\{ 
\begin{array}{cc}
1-\frac{q}{5000}, & 0\leq q\leq 2500, \\ 
0.5, & q>2500.%
\end{array}%
\right. 
\end{equation*}%
\begin{equation*}
p_{2}(q)=\left\{ 
\begin{array}{cc}
3-\frac{3q}{2000}+\frac{20}{\sqrt{2x}}, & 0\leq q\leq 1000, \\ 
\frac{1}{10}\left( 2\sqrt{5}+15\right) , & q>1000.%
\end{array}%
\right. \text{ \ \ };\text{\ \ }t_{2}(q)=\left\{ 
\begin{array}{cc}
1-\frac{q}{2000}, & 0\leq q\leq 1000, \\ 
0.5, & q>1000.%
\end{array}%
\right. 
\end{equation*}

Solving the corresponding optimization problems for the different
coalitions, we obtain the optimal orders, the farmer's revenue and the characteristic function as shown in Table \ref{table 7}.

\begin{table}[!h]
\begin{center}
\begin{tabular}{|c|c|c|c|c|c|}
$S$ & $q_{1}^{S}$ & $q_{2}^{S}$ & $r\left( S\right)$ & $r\left( S\right)-C$ & $v(S)$ \\ 
\hline
$\{1\}$ & $1643.84$ & - & $4255.96$ & $+255.96$ & $3287.67$ \\ \hline
$\{2\}$ & - & $44$ & $131.52$ & $-3868.48$ & $48.36$ \\ \hline
$\{0,1\}$ & $2219.32$ & - & $1167.47$ & $-2832.53$ & $7143.88$ \\ \hline
$\{0,2\}$ & - & $874.55$ & $508.53$ & $-3491.47$ & $893.96$ \\ \hline
$\{1,2\}$ & $1688.32$ & $216.49$ & $4807.36$ & $+807.36$ & $3424.46$ \\ \hline
$\{0,1,2\}$ & $2219.32$ & $874.55$ & $1596$ & $-2404$ & $8117.84$ \\ \hline
\end{tabular}%
\caption{MDF-game for Example \ref{ejemplo 5} \label{table 7}}
\end{center}
\end{table}

Notice that it satisfies NDH since $q_{S}^{S},q_{S}^{S_{0}}<Q$ for all $S\subseteq N.$
Moreover, it does SC as $\overline{b}=0.01<\underset{S\subseteq N}{\min }\left\{ \left(
b(q_{S}^{S})-\frac{C}{Q}\right) \cdot \frac{q_{S}^{S}}{Q-q_{S}^{S}}\right\}
=0.013.$ However, the MPC-allocations, $\theta (v)=\left( 3211.36,4862.9,43.58\right)$
does not belong to the core since $\theta _{2}(v)=43.58<48.36=v(\{2\}).$

\end{example}

The last theorem provides a necessary and sufficient condition for the MCP-allocation to belong to the core.

\begin{theorem}
\label{theo2}Let $(N_{0},Q,C,b,T,P,\overline{b})$ be an MDF-situation
with SC and NDH, $(N_{0},v)$
be the corresponding MDF-game. Then $\theta (v)\in Core(N,v)$ if and only if 
\begin{equation*}
\underset{S\subseteq N}{\max }\{r(S),r(S_{0})\}\leq \frac{q_{N}^{N_{0}}}{%
q_{S}^{N_{0}}}\left( \sum_{i\in S}\left[\left(
p_{i}(q_{i}^{N_{0}})-t_{i}(q_{i}^{N_{0}})\right) \cdot
q_{i}^{N_{0}}\right]-v(S)\right),
\end{equation*}%
for all $S\subseteq N.$
\end{theorem}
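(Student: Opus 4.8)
The plan is to verify the defining inequalities of $Core(N_0,v)$ for the allocation $\theta(v)$ one family at a time, and to show that every family except one holds with no extra hypothesis, while the remaining family is literally equivalent to the displayed condition. Throughout I write $M:=\underset{S\subseteq N}{\max}\{r(S),r(S_0)\}$, so that, by the remark following the definition of $\theta$, $\alpha_i=\frac{q_i^{N_0}}{q_N^{N_0}}(M-r(N_0))$ for $i\in N$. Efficiency is immediate and unconditional: $\sum_{i\in N_0}\theta_i(v)=\sum_{i\in N}\alpha_i+\sum_{i\in N}(\Lambda_i^N(q^{N_0})-\alpha_i)=\sum_{i\in N}\Lambda_i^N(q^{N_0})=v(N_0)$.

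Next I would dispose of the coalitions that contain the farmer. Since $r(N_0)$ is one of the numbers over which $M$ is a maximum (it is the value at $S=N$), we have $M\ge r(N_0)$, hence $\alpha_i\ge 0$ for every $i$ (recall $q_i^{N_0}\ge 0$). For $\emptyset\neq S\subseteq N$ this gives $\sum_{i\in S_0}\theta_i(v)=\sum_{i\in N}\alpha_i+\sum_{i\in S}(\Lambda_i^N(q^{N_0})-\alpha_i)=\sum_{i\in N\setminus S}\alpha_i+\sum_{i\in S}\Lambda_i^N(q^{N_0})\ge\sum_{i\in S}\Lambda_i^N(q^{N_0})\ge\sum_{i\in S}\Lambda_i^S(q^{S_0})=v(S_0)$, the last inequality being property (iii) of Lemma \ref{lemma}; the cases $S=\emptyset$ in both families, together with $v(\{0\})=0\le\theta_0(v)=\sum_{i\in N}\alpha_i$, are trivial. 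So the whole statement reduces to deciding when $\sum_{i\in S}\theta_i(v)\ge v(S)$ for all nonempty $S\subseteq N$.

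The heart of the argument is then a one-line rewriting of $\sum_{i\in S}\theta_i(v)$. Grouping the constant part of $\Lambda_i^N$ and using $\frac{C}{Q}+\frac{\overline b(Q-q_N^{N_0})}{q_N^{N_0}}=\frac{1}{q_N^{N_0}}\left(\frac{C}{Q}q_N^{N_0}+\overline b(Q-q_N^{N_0})\right)=\frac{r(N_0)}{q_N^{N_0}}$ yields $\sum_{i\in S}\Lambda_i^N(q^{N_0})=\sum_{i\in S}\left(p_i(q_i^{N_0})-t_i(q_i^{N_0})\right)q_i^{N_0}-\frac{q_S^{N_0}}{q_N^{N_0}}\,r(N_0)$. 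Subtracting $\sum_{i\in S}\alpha_i=\frac{q_S^{N_0}}{q_N^{N_0}}(M-r(N_0))$, the two $r(N_0)$ contributions cancel and we are left with $\sum_{i\in S}\theta_i(v)=\sum_{i\in S}\left(p_i(q_i^{N_0})-t_i(q_i^{N_0})\right)q_i^{N_0}-\frac{q_S^{N_0}}{q_N^{N_0}}\,M$. Consequently $\sum_{i\in S}\theta_i(v)\ge v(S)$ is equivalent, after rearranging and multiplying by the positive quantity $q_N^{N_0}/q_S^{N_0}$, to $M\le\frac{q_N^{N_0}}{q_S^{N_0}}\left(\sum_{i\in S}\left(p_i(q_i^{N_0})-t_i(q_i^{N_0})\right)q_i^{N_0}-v(S)\right)$; quantifying over all nonempty $S\subseteq N$ gives exactly the stated characterization, and the proof is complete.

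I do not expect a genuine obstacle; the argument is essentially bookkeeping. The only points requiring care are: recognizing the bracketed constant inside $\Lambda_i^N(q^{N_0})$ as $r(N_0)/q_N^{N_0}$, which is precisely what makes the $r(N_0)$ terms telescope away; checking $\alpha_i\ge 0$ so that every coalition containing the farmer is automatically satisfied; and noting that $q_S^{N_0}>0$ for nonempty $S$ (each $q_i^{N_0}>0$ because $p_i(0)>t_i(0)+b(0)>t_i(0)+\frac{C}{Q}-\overline b$, using $b(0)\ge b(Q)>C/Q>C/Q-\overline b$), so that the division is legitimate and the equivalence runs in both directions.
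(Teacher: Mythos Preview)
Your proof is correct and follows essentially the same approach as the paper's: verify efficiency, show $\alpha_i\ge 0$ so that the coalitions containing the farmer are automatically handled via Lemma~\ref{lemma}(iii), and then reduce the remaining core constraints $\sum_{i\in S}\theta_i(v)\ge v(S)$ to the displayed condition by a direct algebraic rewriting. Your bookkeeping is in fact tidier than the paper's---you work throughout with the simplified form $\alpha_i=\frac{q_i^{N_0}}{q_N^{N_0}}(M-r(N_0))$ and the identity $\frac{C}{Q}+\frac{\overline b(Q-q_N^{N_0})}{q_N^{N_0}}=\frac{r(N_0)}{q_N^{N_0}}$, whereas the paper expands everything from the original definition of $\alpha_i$ and cancels term by term---and you also explicitly justify $q_S^{N_0}>0$, which the paper leaves implicit.
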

\begin{proof}
It is easy to check that $\theta (v)$ satisfy EF. Next we prove that $%
\alpha _{i}\geq 0$ for all $i\in N.$ Take a distributor $i\in N,$%
\begin{eqnarray*}
\alpha _{i} &=&\frac{q_{i}^{N_{0}}}{q_{N}^{N_{0}}}\cdot \underset{S\subseteq
N}{\max }\{r(S),r(S_{0})\}-\left( \frac{C}{Q}-\overline{b}\right) \cdot
q_{i}^{N_{0}}-\overline{b}\cdot Q\cdot \frac{q_{i}^{N_{0}}}{q_{N}^{N_{0}}} \\
&\geq &\frac{q_{i}^{N_{0}}}{q_{N}^{N_{0}}}\cdot r(N_{0})-\left( \frac{C}{Q}-%
\overline{b}\right) \cdot q_{i}^{N_{0}}-\overline{b}\cdot Q\cdot \frac{%
q_{i}^{N_{0}}}{q_{N}^{N_{0}}} \\
&=&\frac{C}{Q}\cdot q_{i}^{N_{0}}+\frac{q_{i}^{N_{0}}}{q_{N}^{N_{0}}}\cdot 
\overline{b}\cdot (Q-q_{N}^{N_{0}})-\left( \frac{C}{Q}-\overline{b}\right)
\cdot q_{i}^{N_{0}}-\overline{b}\cdot Q\cdot \frac{q_{i}^{N_{0}}}{%
q_{N}^{N_{0}}} \\
&=&\frac{q_{i}^{N_{0}}}{q_{N}^{N_{0}}}\cdot \overline{b}\cdot Q-\overline{b}%
\cdot Q\cdot \frac{q_{i}^{N_{0}}}{q_{N}^{N_{0}}}=0.
\end{eqnarray*}

Hence, $\sum_{i\in S_{0}}\theta (v)=\sum_{i\in S}\Lambda _{i}^{N}\left(
q^{N_{0}}\right) +\sum_{i\in N\backslash S}\alpha _{i}\geq v(S_{0}).$ Next,
we have to prove that $\sum_{i\in S}\theta (v)\geq v(S)$ for each coalition $%
S\subseteq N.$ With this objective, we have that:  
\begin{eqnarray*}
\sum_{i\in S}\theta (v) &=&\sum_{i\in S}\Lambda _{i}^{N}\left(
q^{N_{0}}\right) -\sum_{i\in S}\alpha _{i} \\
&=&\sum_{i\in S}\Lambda _{i}^{N}\left( q^{N_{0}}\right) -\sum_{i\in S}\left( 
\frac{q_{i}^{N_{0}}}{q_{N}^{N_{0}}}\cdot \underset{S\subseteq N}{\max }%
\{r(S),r(S_{0})\}-\left( \frac{C}{Q}-\overline{b}\right) \cdot q_{i}^{N_{0}}-%
\overline{b}\cdot Q\cdot \frac{q_{i}^{N_{0}}}{q_{N}^{N_{0}}}\right)  \\
&=&\sum_{i\in S}\left[\left( p_{i}(q_{i}^{N_{0}})-t_{i}(q_{i}^{N_{0}})-\frac{C}{Q}-%
\frac{\overline{b}\cdot (Q-q_{N}^{N_{0}})}{q_{N}^{N_{0}}}\right) \cdot
q_{i}^{N_{0}}\right] \\
&&-\sum_{i\in S}\left( \frac{q_{i}^{N_{0}}}{q_{N}^{N_{0}}}\cdot \underset{%
S\subseteq N}{\max }\{r(S),r(S_{0})\}-\left( \frac{C}{Q}-\overline{b}\right)
\cdot q_{i}^{N_{0}}-\overline{b}\cdot Q\cdot \frac{q_{i}^{N_{0}}}{%
q_{N}^{N_{0}}}\right)  \\
&=&\sum_{i\in S}\left[\left( p_{i}(q_{i}^{N_{0}})-t_{i}(q_{i}^{N_{0}})\right)
\cdot q_{i}^{N_{0}}\right]-\left( \frac{C}{Q}-\overline{b}\right) \cdot
q_{S}^{N_{0}}-\overline{b}\cdot Q\cdot \frac{q_{S}^{N_{0}}}{q_{N}^{N_{0}}} \\
&&-\frac{q_{S}^{N_{0}}}{q_{N}^{N_{0}}}\cdot \underset{S\subseteq N}{\max }%
\{r(S),r(S_{0})\}+\left( \frac{C}{Q}-\overline{b}\right) \cdot q_{S}^{N_{0}}+%
\overline{b}\cdot Q\cdot \frac{q_{S}^{N_{0}}}{q_{N}^{N_{0}}} \\
&=&\sum_{i\in S}\left[\left( p_{i}(q_{i}^{N_{0}})-t_{i}(q_{i}^{N_{0}})\right)
\cdot q_{i}^{N_{0}}\right]-\frac{q_{S}^{N_{0}}}{q_{N}^{N_{0}}}\cdot \underset{%
S\subseteq N}{\max }\{r(S),r(S_{0})\}.
\end{eqnarray*}

Finally, $\sum_{i\in S}\theta (v) \geq v(S)$ is equivalent to:
\begin{eqnarray*}
\underset{S\subseteq N}{\max }\{r(S),r(S_{0})\} &\leq &\frac{q_{N}^{N_{0}}}{%
q_{S}^{N_{0}}}\left( \sum_{i\in S}\left[\left(
p_{i}(q_{i}^{N_{0}})-t_{i}(q_{i}^{N_{0}})\right) \cdot
q_{i}^{N_{0}}\right]-v(S)\right) .
\end{eqnarray*}\hfill
\end{proof}


\section{Concluding Remarks}

In this paper, we have extended and applied the model from Guardiola et al. (2007) to the context of agricultural supply chains. We have studied supply chains with a single product and a single period, where multiple distributors source the harvested product from a farmer through the intermediation of an agricultural cooperative. In our model, distributors choose an order size that maximizes their profits. Additionally, the price per kilogram of the product decreases as the total order size increases, making it sensible for distributors to cooperate and place larger orders. Furthermore, distributors can also cooperate with the farmer by securing a reduction in the price per kilogram down to the cost price. In return, the distributors compensate the farmer for the kilograms that are not acquired as part of the cooperation agreement. Additionally, this cooperation arrangement allows the farmer to participate in the profits generated by the distributors from sales to the public across different markets. We have demonstrated that under two initial assumptions (sustainable compensation and no depleting the harvest), the cooperation of all agents generates more profit than the sum of their individual profits.

We have introduced multidistributor-farmer games (MDF-games), which assign to each coalition the optimal benefit they can generate by cooperating with each other. We have proved that these games are superadditive, and also balanced by altruistic allocation, which distributes the total benefit among the distributors. However, despite being a core allocation, the altruistic allocation does not respect the farmer's agreement to participate in the profits from the distributors' sales to the public. For this reason, we have proposed and analyzed an alternative allocation that compensates the farmer. Specifically, each distributor compensates the farmer on the basis of a portion of the marginal contribution that the latter makes to the coalitions in which they participate. We have referred to this distribution as the FC-allocation. Moreover, we have demonstrated that this allocation belongs to the core and we have provided a characterization of it. 

Finally, we have focused on situations in which the farmer may be dissatisfied with the revenue derived from cooperating with all distributors. These situations occur when the revenue generated from the collaboration of all agents is lower than the maximum revenue obtained with any group of distributors. We have proposed an alternative allocation, the MPC-allocation, which compensates the farmer with the minimum amount necessary for him to be satisfied with his cooperation with all distributors. Specifically, distributors compensate the farmer based on the increase in revenue proportional to their optimal order sizes. To conclude our study, we provide a necessary and sufficient condition for the MPC-allocation to be coalitionally stable, meaning it constitutes a core allocation.

There are several lines of future research that we can consider. The fisrt is to examine the same model but incorporate multiple farmers who produce the same product, as well as explore additional forms of cooperation among farmers. This would allow us to analyze the dynamics and outcomes of supply chains with multiple farmers collaborating with distributors. The second is to investigate the implications of the game when the optimal orders of the distributors reach the total quantity of kilograms produced. This scenario would explore the effects of maximum demand on cooperation strategies and outcomes within the supply chain. The third is to consider the case where the demand in each market for each distributor exhibits random behavior. Introducing stochastic elements into the model would provide insights into the robustness and adaptability of cooperative strategies in response to market fluctuations. By exploring these directions, we can deepen our understanding of the complexities and dynamics of the multidistributor-farmer situation, contributing to the advancement of cooperative game theory in agricultural supply chain settings.
\bigskip

\subsection*{Acknowledgements}
This work is part of the R+D+I project grant PID2022-137211NB-100 that was funded by MCIN/AEI/10.13039/50110001133/ and by "ERDF A a way of making EUROPE"/UE. This research was also funded by project PROMETEO/2021/063 from the Comunidad Valenciana.

\bigskip


\begin{thebibliography}{9}

\bibitem{A15} Agbo, M., Rousselière, D., \& Salanié, J. (2015). Agricultural marketing cooperatives with direct selling: A cooperative–non-cooperative game. Journal of Economic Behavior \& Organization, 109, 56-71.

\bibitem{B63} Bondareva, O. N. (1963). Some applications of linear programming methods to the theory of cooperative games. Problemy Kibernet, 10, 119.

\bibitem{C19} Chen, X., Wang, X., \& Zhou, M. (2019). Firms’ green R\&D cooperation behaviour in a supply chain: Technological spillover, power and coordination. International Journal of Production Economics, 218, 118-134.

\bibitem{C20} Ciardiello, F., Genovese, A., \& Simpson, A. (2020). A unified cooperative model for environmental costs in supply chains: the Shapley value for the linear case. Annals of Operations Research, 290, 421-437.

\bibitem{G10} González-Dıaz, J., Garcıa-Jurado, I., \& Fiestras-Janeiro, M. G. (2010). An introductory course on mathematical game theory. Graduate studies in mathematics, 115.

\bibitem{G21} Guardiola, L. A., Meca, A., \& Puerto, J. (2021). Enforcing fair cooperation in production-inventory settings with heterogeneous agents. Annals of Operations Research, 305(1-2), 59-80.

\bibitem{G07} Guardiola, L. A., Meca, A., \& Timmer, J. (2007). Cooperation and profit allocation in distribution chains. Decision support systems, 44(1), 17-27.

\bibitem{H00} Hartman, B. C., Dror, M., \& Shaked, M. (2000). Cores of inventory centralization games. Games and Economic Behavior, 31(1), 26-49.

\bibitem{K21} Khandelwal, C., Singhal, M., Gaurav, G., Dangayach, G. S., \& Meena, M. (2021). Agriculture Supply Chain Management: A Review (2010–2020). Materials Today: Proceedings, 47, 3144–3153. https://doi.org/10.1016/j.matpr.2021.06.193

\bibitem{MT08} Meca, A., \& Timmer, J. (2008). Supply chain collaboration. In V. Kordic (Ed.), Supply Chains: Theory and Application (pp. 245-262). Vienna, Austria:  I-Tech Education and Publishing

\bibitem{NS08} Nagarajan, M., \& Sosic, G. (2008). Game-theoretic analysis of cooperation among supply chain agents: Review and extensions. European Journal of Operational Research, 187(3), 719–745. https://doi.org/10.1016/j.ejor.2006.05.045

\bibitem{R22} Rzeczycki, A. (2022). Supply chain decision making with use of game theory. Procedia Computer Science, 207, 3988–3997. https://doi.org/10.1016/j.procs.2022.09.461

\bibitem{S67} Shapley, L.S. (1967). On Balanced Sets and Cores. Naval Research Logistics, 14, 453-460.


\bibitem{Z21} Zi, W., Zhou, J., Xu, H., Li, G., \& Lin, G. (2021). Preserving relational contract stability of fresh agricultural product supply chains. Journal of Industrial and Management Optimization, 17(5), 2505-2518.

\end{thebibliography}
\end{document}